\newcommand{\Dfn}[1]{\textbf{\emph{#1}}}
\newcommand{\Dom}[1]{\ensuremath{\mathit{domain}(#1)}}
\newcommand{\Ran}[1]{\ensuremath{\mathit{range}(#1)}}
\newcommand{\TARGETDOC}{0}
\begin{document}
\title{Quantifying Policy Administration Cost in an Active
  Learning Framework}


\ifthenelse{\TARGETDOC = 0}{
\author{Si Zhang}
\affiliation{
    \institution{Department of Computer Science \\ University of Calgary}
    \city{Calgary}
    \country{Canada}
}
\email{si.zhang2@ucalgary.ca}

\author{Philip W. L. Fong}
\affiliation{
    \institution{Department of Computer Science \\ University of Calgary}
    \city{Calgary}
    \country{Canada}
}
\email{pwlfong@ucalgary.ca}
}{
  \author{Anonymized}
}

\begin{abstract}
  This paper proposes a computational model for policy administration.
  As an organization evolves, new users and resources are gradually
  placed under the mediation of the access control model. Each time
  such new entities are added, the policy administrator must
  deliberate on how the access control policy shall be revised to
  reflect the new reality. A well-designed access control model must
  anticipate such changes so that the administration cost does not
  become prohibitive when the organization scales up.  Unfortunately,
  past Access Control research does not offer a formal way to quantify
  the cost of policy administration.  In this work, we propose to
  model ongoing policy administration in an active learning
  framework. Administration cost can be quantified in terms of query
  complexity.  We demonstrate the utility of this approach by applying
  it to the evolution of protection domains.  We also modelled
  different policy administration strategies in our framework.  This
  allowed us to formally demonstrate that domain-based policies have a
  cost advantage over access control matrices because of the use of
  heuristic reasoning when the policy evolves.  To the best of our
  knowledge, this is the first work to employ an active learning
  framework to study the cost of policy deliberation and demonstrate
  the cost advantage of heuristic policy administration.
\end{abstract}

\keywords{Access control, policy administration, active learning,
  query complexity, heuristics}

\begin{CCSXML}
<ccs2012>
   <concept>
       <concept_id>10002978.10002991.10002993</concept_id>
       <concept_desc>Security and privacy~Access control</concept_desc>
       <concept_significance>500</concept_significance>
       </concept>
 </ccs2012>
\end{CCSXML}

\ccsdesc[500]{Security and privacy~Access control}

\maketitle

\section{Introduction}
Access Control is concerned with the specification and enforcement of
policies that govern who can access what.  Access control policies,
however, must be revised when the organization's needs evolve. A
typical situation that motivates changes to an existing access control
policy is the introduction of new subjects (e.g., new hires) or new
objects (e.g., equipment purchases).  The policy administrator will
then need to deliberate on what changes to the policy must be put in
place, before policy revisions can be implemented. This is a task
commonly known as \Dfn{policy administration}.

In the history of Access Control research, one of the enduring
problems has been to improve the scalability of policy administration.
In other words, access control models are designed to anticipate
changes: when new subjects and objects are introduced over time, it
should not take the policy administrator a lot of deliberation efforts
to revise the policy. In this work, such deliberation overheads are
called the \Dfn{cost of policy administration} (or simply
\Dfn{administration cost}).

For example, instead of having to deliberate about the contents of
every new entries in the access control matrix \cite{Graham1972} when
a new subject or object is created, Role-Based Access Control (RBAC)
promises to reduce policy administration overhead by introducing an
abstraction of subjects known as roles \cite{Sandhu1996}. Permissions
are granted not directly to subjects, but to roles.  When a new
subject is introduced, the policy administrator only needs to decide
which roles the subject shall be assigned to (rather than figuring out
which permissions should be assigned directly to the subject). Since
the number of roles is expected to be much smaller than the number of
subjects and the number of permissions, it is anticipated that the
overall complexity of permission assignment and user assignment is
reduced.  The intuition is that this facilitates policy
administration.

One of the reasons that Attribute-Based Access Control (ABAC)
\cite{Hu2015} has recently attracted the attention of the Access
Control research community is the same promise of making policy
administration scalable, especially in the era of the Internet of
Things (IoT), in which the number of devices grows with the number of
users.  By adopting an intensional style of policy specification
(i.e., specifying the condition of access rather than enumerating the
subjects who should be granted access), ABAC promises to reduce
administration cost when new subjects and objects are introduced.  It
is assumed that the condition of access, if formulated in its most
general form, shall remain the same even when new subjects or objects
are introduced.  Intuitively, this reduces administration cost.

Unfortunately, the savings in policy administration cost in Access
Control research is usually characterized in intuitive terms. There
has been no formal framework to quantify the policy deliberation
efforts required by the policy administrator when new entities (e.g.,
subjects and objects) are created in the protection state. In this
paper, we take the first step to quantify policy administration cost,
so that the benefits of a specific change in policy administration
strategies can be formally accounted for.

We propose to model policy administration in an evolving organization
under the framework of \Dfn{active learning} \cite{Settles2012}. In
active learning, a learner is equipped with a number of queries that
it can use to interrogate a teacher, who possesses complete knowledge
of the target concept. The learner formulates a series of queries
to obtain information about the target concept. With such information
the learner revises and improves its hypothesis of the target concept
over time. Adopting this framework, we model the policy administrator
as the learner. The target concept encapsulated behind the teacher
is the access control matrix of all subjects and objects that can ever
exist. Learner queries correspond to two aspects of reality. First,
some queries allow the learner to discover new entities (i.e.,
subjects and objects). Such queries model organizational evolution.
Second, some other queries correspond to the policy deliberation
efforts of the learner. By asking this second type of query, the
learner discovers the access control characteristics of the new
entities (i.e., who can access what).  The policy administrator
maintains a hypothesis that summarizes what it knows about the
entities.  This hypothesis is a working policy.  As
learning progresses, the policy administrator becomes more and more
informed about the access control characteristics of the entities,
and accordingly improves its policy formulation. The following
summarizes our approach.
\begin{figure}[h]
\centering
\begin{tabular}{rcp{4cm}}
 \textbf{learner} & $\leftrightarrow$ & \emph{policy administrator}\\
 \textbf{target concept} & $\leftrightarrow$ & \emph{access control matrix of all entities that can ever exist}\\
 \textbf{query} & $\leftrightarrow$ & \emph{(a) discovery of new entities or
        (b) deliberation of access control characteristics of entities}\\
 \textbf{hypothesis} & $\leftrightarrow$ & \emph{working policy}
\end{tabular}
\end{figure}

In this modelling approach, the teacher corresponds to
multiple facets of reality: (a) the discovery of new entities and (b)
the deliberation efforts of the policy administrator. By assessing the
\Dfn{query complexity} \cite[Ch.~8]{Kearns1994} of the learning
process, that is, the number of queries required to learn an adequate
hypothesis, we obtain a quantitative characterization of the policy
administration cost incurred to the policy administrator.  With this
framework, we can alter the policy administration strategy (i.e., what
queries to issue) and examine how such alterations impact the query
complexity.

We demonstrate the utility of this approach by applying it to the
administration of \Dfn{protection domains}. The basic idea of
protection domains is that entities (e.g., users) with equivalent
access control characteristics (e.g., needing the same privileges) are
grouped under the same protection domain. Intuitively, this grouping
facilitates policy administration.  Protection domains are almost as
old as the study of Access Control itself and are widely deployed in
our software infrastructure.  An example is the now-classic
\Dfn{domain and type enforcement} \cite{Badger1995}, which has
been implemented in SELinux, which in turn is the foundation of the
Android operating system. Protection domains can also be found in
programming language environments (e.g., Java) and Internet-of-Things
platforms \cite{Carranza2019}.

We do not differentiate subjects and objects, and treat them uniformly
as \Dfn{entities}. As we shall see, this is a generalization rather
than a restriction, as each IoT device plays the roles of both subject
and object simultaneously. We use the term \Dfn{domain-based policy}
to refer to the combination of (a) a set of protection domains, (b) an
assignment of each entity to a protection domain, and (c) a collection
of authorization rules of the form: ``any entity $u$ in protection
domain $d_1$ may exercise access right $r$ over any entity $v$
belonging to protection domain $d_2$.''

Suppose new entities join the organization over time, new entities
with never-before-seen access control characteristics.  Then the
number of protection domains, the assignment of entities to these
domains, and the authorization rules all need to evolve to accommodate
the novelties.  All these incur administration costs to the policy
administrator.  At stake here is the scalability of policy
administration. In an IoT setting, we are talking about tens of
thousands of devices in one organization, the cost of policy
administration could become unmanageable.

As we applied the aforementioned active learning framework to assess
the administration cost for domain-based policies, we noticed a close
analogy between policy evolution and scientific discovery.
Philosophers of science point out that scientists generate new
hypotheses by heuristic reasoning, a process that is inherently
fallible \cite{sep-scientific-discovery, Ippoliti2018}.  In a similar
manner, we found out that heuristics enable the policy administrator
to exploit the conceptualizing instruments (e.g., protection domains,
roles, attributes, relationships) of the underlying access control
model to reduce administration cost.  The price is that the policy
administrator must now commit to fix any detected errors.

We claim the following contributions:
\begin{enumerate}
\item We developed an active learning framework for assessing the
  administration cost involved in revising domain-based policies in an
  evolving organization.  Specifically, we quantified administration
  cost in terms of query complexity (i.e., the number of questions
  that the learner needs to ask).
\item Under this framework, we demonstrated that administration cost
  depends not only on the access control model alone, but also on the
  manner in which policy administration is conducted.  We term the
  latter a policy administration strategy.  We demonstrated that, when
  heuristic reasoning is used in the policy administration strategy,
  using protection domains incurs a lower administration cost than
  when the same policy is represented as an access control matrix.
\item This work suggests a methodology that enables future work to
  study the policy administration cost of an access control model in a
  quantitative manner, and to compare the cost advantages of different
  policy administration strategies.
\end{enumerate}

This paper is organized as follows.  \S \ref{sec:review} formally
introduces domain-based policies, and reviews the theory of
domain-based policies developed in prior work.  Then \S
\ref{sec:framework} introduces an active learning framework for
modelling policy administration, and applies it to study the
administration of domain-based policies.  \S
\ref{sec:tireless-learner} demonstrates that, with a naive policy
administration strategy, domain-based policies offer no cost advantage
over access control matrices.  \S \ref{sec:conservative-learner} then
introduces a heuristic policy administration strategy, which
implements the principle of Occam's Razor.  By allowing the learner to
be occasionally fallible and committing to fix any detected errors,
the overall administration cost can be significantly reduced.  Related
work is surveyed in \S \ref{sec:related-work}, and \S
\ref{sec:conclusion} concludes the paper by presenting the
methodological lessons that future work can draw on.

\section{Domain-Based Policies: A Review}
\label{sec:review}

Our current work is built on the theory of domain-based policies
developed by Zhang and Fong in \cite[\S 2]{DBPM}. We review their
results before proceeding to the presentation of our own
contributions.

\paragraph{Access Control Matrices as Digraphs.} Suppose there is a
fixed set $\Sigma$ of access rights. The members of $\Sigma$ can also
be interpreted as access modes in UNIX, event topics in the IoT
setting, method invocations, etc. An access control
matrix can then be represented as an \Dfn{edge-labelled directed
  graph} (or simply \Dfn{digraph}) $G = (V, E)$, where $V$ is the set
of vertices and $E \subseteq V \times E\times V$ is the set of edges.
Each vertex represents an entity such as a subject, an object, or a
device in the IoT setting. An edge $(u, a, v) \in E$ represents the
permission for entity $u$ to exercise access right $a$ over entity
$v$.  Essentially, a digraph exhaustively enumerates the permissions
of the corresponding access control matrix in the form of edges.  We
also write $V(G)$ and $E(G)$ for $V$ and $E$ respectively.  Common
graph-theoretic concepts such as subgraphs, isomorphism, etc, can be
defined as usual. Given $U \subseteq V(G)$, we write $G[U]$ for the
\Dfn{subgraph of $G$ induced by $U$}. Here, the vertex set of $G[U]$
is simply $U$, and for $u,v \in U$ and $a \in \Sigma$, $(u,a,v)$ is an
edge in $G[U]$ iff $(u,a,v) \in E(G)$.

\paragraph{Domain-Based Policies.} Given a digraph $G$, a
\Dfn{domain-based policy} is a pair $(H, \pi)$, where $H$ is a digraph
and $\pi : V(G) \rightarrow V(H)$ maps vertices of $G$ to vertices of
$H$. The intention is that the members of $V(H)$ are the
\Dfn{protection domains} (or simply \Dfn{domains}). The mapping $\pi$
assigns every entity in $G$ to a domain.  When an access request
$(u,a,v)$ is received by the protection mechanism, the request is
granted iff $(\pi(u), a, \pi(v)) \in E(H)$. In other words, an edge
$(x, a, y)$ in $H$ signifies that any entity belonging to domain $x$
may exercise access right $a$ over any entity belonging to domain
$y$. Conversely, absence of an edge signifies the denial of access.
Typically, we want $\pi$ to map entities with equivalent access
control characteristics to the same domain.

\paragraph{Correct Enforcement.} Given an authorization request
$(u,a,v)$, a poorly formulated domain-based policy $(H, \pi)$ for
$G$ may produce a different authorization decision than $G$ itself.
We say that $(H, \pi)$ \Dfn{enforces} $G$ whenever $(u,a,v) \in E(G)$
iff $(\pi(u), a, \pi(v)$ for every $u,v \in V(G)$ and $a \in \Sigma$.

A function $\pi: V(G) \rightarrow V(H)$ is a \Dfn{strong homomorphism}
from $G$ to $H$ whenever $(u,a,v) \in E(G)$ iff $(\pi(u), a, \pi(v)$.
Therefore, domain-based policy $(H, \pi)$ enforces $G$ iff
$\pi$ is a strong homomorphism from $G$ to $H$.

\paragraph{Digraph Summary.} When $(H, \pi)$ enforces $G$, $H$
properly summarizes the authorization decisions using domains as an
abstraction of entities. In theory, $G$ is always a ``summary'' of
itself, but not a very succinct one. We desire to compress the
information in $G$ as much as possible by grouping as many entities
into the same domain as possible. In other words, we desire the most
succinct summary of $G$. Digraph $H$ is a \Dfn{summary} of digraph $G$
iff (a) $G$ is strongly homomorphic to $H$, and (b) $G$ is not
strongly homomorphic to any proper subgraph of $H$.

Suppose $H$ is a summary of $G$ through the strong homomorphism
$\pi: V(G) \rightarrow H(G)$.  Then $H$ and $\pi$ have three important
characteristics.  First, $\pi$ must be a surjective function (meaning
a summary has no redundant vertices).  Second, $H$ is
\Dfn{irreducible}, meaning that every summary of $H$ is isomorphic to
$H$ itself.  In other words, a summary cannot be further
summarized.\footnote{This notion of minimality does not apply to
  infinite digraphs. One can construct an infinite series of infinite
  digraphs $G_1$, $G_2$, $\ldots$, so that for $i \in \mathbb{N}$,
  $G_{i+1}$ is a proper subgraph of $G_i$ and $G_i$ is strongly
  homomorphic to $G_{i+1}$.  See \cite{Hell1992} for examples of such
  a series. Therefore, when the notion of summary is invoked in this
  paper, it is always concerned with the summary of a finite digraph,
  even though the latter could be a subgraph of an infinite
  digraph.  } Third, every summary of $G$ is isomorphic to $H$ (meaning
a summary is unique up to isomorphism).

\paragraph{Summary Construction.} Zhang and Fong devised a tractable
means for constructing the summary of a given digraph $G$. Their
method is based on an equivalence relation $\equiv_G$ over the vertex
set $V(G)$.  In particular, we write $u \equiv_G v$ (meaning $u$ is
\Dfn{indistinguishable} from $v$) iff both conditions below hold for
every $a \in \Sigma$:
\begin{enumerate}
\item The four edges, $(u,a,u)$, $(u,a,v)$, $(v,a,u)$, and $(v,a,v)$
  either \emph{all} belong to $E(G)$ or \emph{all} does not
  belong to $E(G)$.
\item For every $x \in V(G) \setminus \{u,v\}$,
  \begin{enumerate}
  \item $(u,a,x) \in E(G)$ iff $(v,a,x) \in E(G)$, and
  \item $(x,a,u) \in E(G)$ iff $(x,a,v) \in E(G)$.
  \end{enumerate}
\end{enumerate}
In other words, $u$ and $v$ are indistinguishable iff their adjacency
with other vertices are identical.  We also write
$\mathit{adj}_G(u, v)$ to denote the set
$\{\, +a \,\mid\, (u,a,v) \in E(G) \,\} \cup \{\, -a \,\mid\, (v, a,
u) \in E(G) \,\}$.  Thus $x \equiv_G y$ iff
$\mathit{adj}_G (x, z) = \mathit{adj}_G (y, z)$ for every
$z \in V(G)$, iff $\mathit{adj}_G (z, x) = \mathit{adj}_G (z, y)$ for
every $z \in V(G)$.

Exploiting the fact that the indistinguishability of two given
vertices can be tested in linear time, Zhang and Fong devised an
algorithm, \textsc{Summarize}, which takes as input a digraph $G$, and
produces a domain-based policy $(H, \pi)$, so that $H$ is both a
summary and a subgraph of $G$, and $\pi$ is the corresponding
surjective strong homomorphism.  The algorithm runs in $O(kn^3)$ time,
where $n = |V(G)|$ and $k = |\Sigma|$.

\section{Policy Administration as Active Learning} 
\label{sec:framework}

In an evolving organization, we do not know of all the entities that
will ever join the organization.  As the organization grows and
technology advances, new entities will be created. These entities may
have access requirements and characteristics that are radically
different from the existing ones.  It is simply unrealistic to expect
that the domain-based policies we constructed using Zhang and Fong's
\textsc{Summarize} algorithm \cite[\S 2]{DBPM} will continue to work
in the future as new entities join the mix. The policy administrator
will have to assign the new entities to existing protection domains or
even formulate new protection domains. Our goal in this section is to
create a formal model for this ongoing policy administration process,
so that we can quantify the cost of policy administration.  One way to
think about this is that the access control matrix $G$ evolves over
time as more and more vertices join the digraph.  Yet a more fruitful
way to capture this dynamism in a formal model is to envision a
countably infinite digraph $G$, complete with all the vertices that
will ever join the organization, but the knowledge of this infinite
graph is incrementally disclosed to the policy administrator.  The
administrator's task is to grow her understanding of $G$ over time,
revising her summary $H$ so that $H$ enforces a larger and larger
induced subgraph of $G$.  To formalize this dynamism of policy
administration, we adopt an active learning framework
\cite{Settles2012}, one that is inspired by the work of Angluin
\cite{Angluin1987} from the literature of computational learning
theory \cite{Kearns1994}.

We introduce some notations before we describe our active learning
protocol. 
\begin{definition}[Error]
  Suppose $(H, \pi)$ is a domain-based policy for digraph $G$.  A
  \Dfn{grant error} is a request $(u, a, v)$ such that
  $(\pi(u), a, \pi(v)) \in E(H)$ but $(u, a, v) \not\in E(G)$.  A
  \Dfn{deny error} is a request $(u, a, v)$ such that
  $(\pi(u), a, \pi(v)) \not\in E(H)$ but $(u, a, v) \in E(G)$.  An
  \Dfn{error} $(u, a, v)$ is either a grant error or a
  deny error.  Let $\mathcal{E}(G, H, \pi)$ denote the set of all
  errors.
\end{definition}

Our active learning protocol involves two parties, the \Dfn{learner}
and the \Dfn{teacher}.  Loosely speaking, the goal of the learner, who
is a reactive process, is to gradually discover the structure of a
countably infinite digraph $G$. This graph is encapsulated behind a
hypothetical teacher.  Initially, the learner has no information about
$G$.  The learner acquires information about $G$ by issuing
\Dfn{queries} to the teacher. The teacher is assumed to be truthful:
it never lies about $G$.  The protocol supports three queries:

\begin{asparaenum}
\item \textbf{Next Vertex Query (NVQ):} When the query
  $\textsc{NVQ}()$ is issued by the learner, the teacher will return a
  never-before-seen vertex from $G$.  This query models the
  recruitment of a new user or the acquisition of a new resource by
  the organization.  Let $U$ be the (finite) set of all vertices that
  the teacher has returned so far through NVQ.  It is assumed that the
  teacher tracks the contents of $U$.  The teacher may return vertices
  of $G$ in any possible order.
\item \textbf{Connection Query (CNQ):} The learner issues
  $\textsc{CNQ}(u, a, v)$ to inquire about the existence of the edge
  $(u, a, v)$ in $G$.  Here, $u,v \in U$ and $a \in \Sigma$.  The
  teacher returns a boolean value.  The CNQ query is intended to model
  the cognitive overhead incurred by the policy administrator when the
  latter deliberates on whether to allow entity $u$ to perform
  operation $a$ against entity $v$.
\item \textbf{Hypothesis Testing Query (HTQ):} The learner invokes the
  query $\textsc{HTQ}(H, \pi)$, where $H$ is a finite digraph and
  $\pi : U \rightarrow V(H)$ is a function, to check if $H$ and $\pi$
  properly summarize the accessibility encoded in the induced subgraph
  $G[U]$.  The teacher responds by returning the set
  $\mathcal{E}(G[U], H, \pi)$ of errors.  This query models the
  releasing of the domain-based policy $(H, \pi)$.  Experiences with
  $(H, \pi)$ are gained and errors are identified.\footnote{ This
    practice of deploying a ``good enough'' policy that may still
    contain errors is corroborated by the findings of He \emph{et al.}
    \cite{He2018}, in which they found that users of smart home
    devices indeed tolerate the existence of both grant errors and
    deny errors in their policy formulation when they are
    still learning about the effects of adopting a certain access
    control policy.  }  The error set represents knowledge about the
  policy $(H, \pi)$ that is acquired outside of policy deliberation.
  Such knowledge may come from stakeholder feedback, expert scrutiny,
  or empirical experiences obtained through the deployment of the
  policy.  Depending on the application domain, this knowledge may
  also come from a combination of the above sources.  Note that the
  error set $\mathcal{E}(G[U], H, \pi)$ concerns only the finite
  subgraph $G[U]$ induced by the set $U$ of previously returned
  vertices. The learner is not supposed to know anything about the
  rest of $G$.
\end{asparaenum}

Given the queries above, the intention is for the learner to
strategize the questioning in such a way that it eventually learns a
domain-based policy $(H, \pi)$ for $G$.  The criteria of successful
learning involve two aspects. The first criterion concerns the quality
of $H$ and $\pi$. That is, $H$ should be a summary of what the learner
knows about $G$.  The second criterion concerns how fast this learning
process converges.  We capture these two criteria in the following
definition:
\begin{definition} \label{def:successful}
  A learner is \Dfn{successful} iff it satisfies the
  two criteria below:
  \begin{description}
  \item[SC-1.] When HTQ is invoked, the argument $H$ must be
    irreducible and the argument $\pi$ must be surjective.
  \item[SC-2.] Once an NVQ query has been issued, the learner must
    issue at least one HTQ that returns an empty set of errors, before
    the next NVQ can be issued.
  \end{description}
\end{definition}
Success criterion \textbf{SC-1} is inspired by the fact that if $H$ is
a summary of $G[U]$ via strong homomorphism $\pi$, then $\pi$ must be
surjective and $H$ must be irreducible (see \S \ref{sec:review}) This
success criterion requires the learner to at least attempt to
construct a summary $H$ of $G[U]$.  Success criterion \textbf{SC-2} is
an aggressive learning schedule: the learner must fix all errors
before progressing to consider another new vertex of $G$.  These two
success criteria are by no means the only ones possible.  We plan to
explore the implications of alternative criteria in future work.

A successful learner is the computational model of a \Dfn{policy
  administration strategy}. While such a strategy is presented
algorithmically, it is not intended to be program code executed by a
computer. Instead, the strategy prescribes how the policy
administrator (\emph{a human}) shall respond to the introduction of new
entities: e.g., what policy deliberation efforts shall be conducted
(CNQ), when to assess the revised policy (HTQ), and how to fix up a
policy when errors are discovered.  We are interested in assessing the
performance of successful learners (policy administration strategies).
What concerns us is not so much time complexity: we consider the
learner acceptable so long as the computational overhead between
successive queries is a polynomial of $|U|$.  In active learning
\cite[Ch.~8]{Kearns1994}, the competence of a learner is evaluated by
its \Dfn{query complexity}, that is, the number of queries issued by
the learner.  We adapt this practice as follows.
\begin{itemize}
\item The three queries (\textsc{NVQ}, \textsc{CNQ}, and \textsc{HTQ})
  are intended to model different aspects of reality. We do not count
  them in the same way.
\item The learner is a reactive process (it never terminates) because
  the digraph to be learned is infinite. Because of this, the number
  of queries issued by the learner may grow to infinite as well. To
  cope with this, \textbf{SC-2} demands that learning occurs in
  rounds. Every round begins with the invocation of an \textsc{NVQ}.
  After that some number of \textsc{CNQ}s and \textsc{HTQ}s
  follow. The round ends with an \textsc{HTQ} that returns an empty
  set of errors.  We therefore use the number of rounds
  (i.e., the number of \textsc{NVQ}s) as an ``input parameter,'' and
  express the number of other queries (or errors) as a
  function of this parameter.
\item Policy administration overhead is captured by \textsc{CNQ}s.  We
  therefore quantify administration cost as the number of
  \textsc{CNQ}s issued when $n$ rounds of learning have occurred
  (i.e., $n$ invocations of \textsc{NVQ}s have been issued so far).
\item As for \textsc{HTQ}s, we are concerned about the total number of
  errors committed in $n$ rounds of learning rather than the number of
  \textsc{HTQ} invocations.
\end{itemize}

\section{Tireless Learner}
\label{sec:tireless-learner}

\begin{algorithm}[t] 
  \caption{The Tireless Learner. \label{algo:tireless}}
  Initialize digraph $G^*$ so that $V(G^*) = E(G^*) = \emptyset$\;
  \While{$\mathit{true}$}{
    $u = \textsc{NVQ}()$\label{line:eager-NVQ}\;
    $V(G^*) = V(G^*) \cup \{ u \}$\;
    \ForEach{$a \in \Sigma$\label{line:eager-CNQ-begin}}{
      \ForEach{$v \in V(G^*)$}{
        \lIf{$\textsc{CNQ}(u,a,v)$}{$E(G^*) =
          E(G^*) \cup \{ (u,a,v)\}$}
      }
      \ForEach{$v \in V(G^*) \setminus \{u\}$}{
        \lIf{$\textsc{CNQ}(v,a,u)$}{$E(G^*) =
          E(G^*) \cup \{ (v,a,u)\}$\label{line:eager-CNQ-end}}
      }
    }
    $(H, \pi) = \textsc{Summarize}(G^*)$\label{line:eager-offline}\;
    $\mathcal{E} = \textsc{HTQ}(H, \pi)$\label{line:eager-HTQ}
    \tcp*{$\mathcal{E}$ is always $\emptyset$.}
  }
\end{algorithm}

To demonstrate how the learning protocol works, we explore here a
naive learner: the \Dfn{Tireless Learner} (Algorithm
\ref{algo:tireless}). The Tireless Learner captures the following
policy administration strategy: As each new entity $u$ is revealed,
the policy administrator deliberates on the contents of every new
entry in the access control matrix $G[U]$, and then summarizes the
updated $G[U]$ into a domain-based policy $(H, \pi)$.  A number of
technical observations can be made about the Tireless Learner:
\begin{itemize}
\item An invariant of the outermost while-loop is that $G^* = G[U]$,
  where $U$ is the (finite) set of vertices that has been returned so
  far by the \textsc{NVQ}.
\item When a new vertex is returned through the NVQ (line
  \ref{line:eager-NVQ}), the Tireless Learner invokes \textsc{CNQ}s
  exhaustively to discover how the new vertex is connected to the rest
  of $G^*$ (lines
  \ref{line:eager-CNQ-begin}--\ref{line:eager-CNQ-end}).  This is how
  the learner maintains the invariant $G^* = G[U]$.
\item \textsc{Summarize} is invoked on line
  \ref{line:eager-offline} to compute a domain-based policy
  $(H, \pi)$ of $G[U]$.
\item Since $H$ is a summary of $G[U]$ via the strong homomorphism
  $\pi$, $H$ is irreducible and $\pi$ is surjective (see \S
  \ref{sec:review}).  Thus \textbf{SC-1} is satisfied.
\item Since $(H, \pi)$ enforces $G[U]$ (see \S \ref{sec:review}), the
  set $\mathcal{E}$ of errors returned by \textsc{HTQ} on line
  \ref{line:eager-HTQ} is always an empty set. Thus \textbf{SC-2} is
  satisfied.
\item The Tireless Learner is successful.
\end{itemize}

We say that the Tireless Learner is naive because it issues CNQs
relentlessly. The administration cost is therefore maximized.  We
quantify the administration cost in the following theorem.
\begin{theorem}[Administration Cost]
  \label{thm:tireless-admin-cost}
  Let $k$ be $|\Sigma|$ and $n$ be the number of \textsc{NVQ}s issued
  by the Tireless Learner so far.  Then the \textsc{CNQ} has been
  invoked $kn^2$ times.
\end{theorem}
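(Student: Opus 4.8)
The plan is a direct counting argument over the rounds of the outer while-loop. First I would establish the bookkeeping invariant that immediately after the $i$-th invocation of \textsc{NVQ} (and the subsequent insertion of the returned vertex $u$ into $V(G^*)$), the set $V(G^*)$ contains exactly $i$ vertices. This holds because \textsc{NVQ} returns a never-before-seen vertex on each call, so every round enlarges $V(G^*)$ by precisely one vertex.

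Next I would count the \textsc{CNQ}s issued within a single round $i$. The loop body consists of a \textbf{foreach} over the $k$ access rights in $\Sigma$, and for each right $a$ it runs two inner loops. The first loop ranges over all $v \in V(G^*)$, contributing one \textsc{CNQ}$(u,a,v)$ per vertex, i.e.\ $i$ queries (this includes the self-loop case $v = u$). The second loop ranges over $v \in V(G^*) \setminus \{u\}$, contributing one \textsc{CNQ}$(v,a,u)$ per vertex, i.e.\ $i-1$ queries. Hence each access right accounts for $i + (i-1) = 2i-1$ queries, and round $i$ issues $k(2i-1)$ \textsc{CNQ}s in total. The one point that warrants care here is the asymmetric treatment of the two loops: the self-loop $(u,a,u)$ is probed exactly once, in the first loop, and the exclusion of $u$ from the second loop is precisely what prevents it from being probed a second time.

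Summing over the first $n$ rounds then yields the total. The number of \textsc{CNQ}s is $\sum_{i=1}^{n} k(2i-1) = k \sum_{i=1}^{n}(2i-1) = k n^2$, where the inner sum collapses to $n^2$ by the elementary identity $\sum_{i=1}^n (2i-1) = n^2$. Because the inner loops of round $i$ involve only the vertex newly introduced in that round, no \textsc{CNQ} is issued in more than one round; thus this sum counts every \textsc{CNQ} exactly once, and the stated figure is exact rather than merely an upper bound. I do not anticipate any genuine obstacle: the entire argument is elementary bookkeeping, and the only place to remain vigilant is the off-by-one distinction between the two inner loops.
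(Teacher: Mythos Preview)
Your proposal is correct and follows essentially the same approach as the paper's proof: count $k(2i-1)$ \textsc{CNQ}s in the $i$th round and sum the odd numbers to obtain $kn^2$. Your write-up is simply more explicit about the bookkeeping (the size of $V(G^*)$, the self-loop handling, and the absence of double counting), but the argument is identical.
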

\begin{proof}
  During the $i$th iteration of the while-loop, the learner invokes
  $k(2i-1)$ \textsc{CNQ}s to update $G^*$ (lines
  \ref{line:eager-CNQ-begin}--\ref{line:eager-CNQ-end}).  After $n$
  iterations, the total number of \textsc{CNQ}s is
  $k\times (1 + 3 + 5 + \cdots + (2n-1)) = kn^2$.
\end{proof}
This means the Tireless Learner, as a policy administration strategy,
can successfully learn a domain-based policy by incurring an
administration cost of $kn^2$. Note that $kn^2$ is exactly the number
of bits of information carried by an access control matrix. In other
words, the Tireless Learner deliberates exhaustively on every bit of
information in the access control matrix.  One would have achieved
this administrative cost ($kn^2$) simply by tracking an access control
matrix. \emph{Even though protection domains are used, the Tireless
  Learner did not take advantage of this access control abstraction to
  reduce its administration cost.}  This observation anticipates a key
insight offered by this work: \emph{The merit of an access control
  model lies not only in the model itself. The model is able to scale
  with the growing number of entities because it is accompanied by a
  complementary policy administration strategy that exploits the
  conceptualizing instruments (e.g., protection domains, roles,
  attributes, relationships) offered by the model.}  An alternative
policy administration strategy for domain-based policy will be
presented in the \S \ref{sec:conservative-learner}.  As we consider
alternative policy administration strategies, $kn^2$ will be the
baseline of comparison. The goal is to do better than tracking only an
access control matrix, so that the administration cost does not grow
quadratically with the number of entities.

\section{Conservative Learner}
\label{sec:conservative-learner}

A policy administration strategy (i.e., a learner) can lower
administration cost by performing heuristic reasoning. Rather than
exhaustively deliberating on every bit of information in the access
control matrix, the learner can make use of a ``fallible'' learning
strategy to reduce the deliberation overhead. In exchange, errors may
be produced, and the policy needs to be fixed when errors are
detected. The use of heuristic strategies is a common phenomenon in
scientific discovery \cite{sep-scientific-discovery}. When a scientist
generates candidate hypotheses, heuristics may guide the process
\cite{Ippoliti2018}.  And heuristics are by definition not
error-proofed.  In a similar vein, the policy administrator may engage
in fallible, heuristic reasoning when it constructs a policy.  In
fact, there is empirical evidence that such a trade-off between the
efficiency of policy deliberation and the correctness of policy
formulation indeed occurs in the context of IoT systems, when the
timely deployment of policies is desired \cite{He2018}.

\begin{algorithm}[t]
  \caption{The Conservative Learner.\label{algo:conservative-learner}}
  $u = \textsc{NVQ}()$\label{line:cons-initial-NVQ}\;
  let digraph $H$ be $(V_0,E_0)$,
  where $V_0 = \{u\}$, $E_0 = \emptyset$\label{line:cons-init-vertices}\;
  \ForEach{$a \in \Sigma$}{
    \lIf{$\textsc{CNQ}(u,a,u)$}{
      $E(H) = E(H) \cup
      \{(u,a,u)\}$\label{line:cons-init-edges}}
  }
  let function $\pi = \{ u \mapsto u\}$\label{line:cons-init-pi}\;
  let decision tree $\mathcal{T}$ be a leaf $n$,
  with $\ell(n) = u$\label{line:cons-init-tree}\;
  $\mathcal{E} = \textsc{HTQ}(H,\, \pi)$\label{line:cons-initial-confirm}\;
  \While{true\label{line:cons-main-loop}}{
    $u = \textsc{NVQ}()$\label{line:cons-NVQ}\;
    $w = \textsc{Classify}(\mathcal{T},\, u)$\label{line:cons-classify}\;
    $\pi' = \pi[u \mapsto w]$\label{line:cons-extend-pi}\;
    $\mathcal{E} = \textsc{HTQ}(H,\, \pi')$\label{line:cons-first-confirm}\;
    \If{$\mathcal{E} = \emptyset$\label{line:cons-detect-error}}{
      $\pi = \pi'$\label{line:cons-no-revision}\;
    }
    \Else{
      $(\mathcal{T},\pi) = \textsc{Revise}(\mathcal{T}, H, \pi,
      u, w,\mathcal{E})$\label{line:cons-revise-tree}\;
      $V = \Ran{\pi}$\label{line:cons-recompute-V-of-H}\;
      $E = \{ (u,a,v)
      \in V \times\Sigma\times V
      \mid
\textsc{Edg}(u,a,v,H,\pi',\mathcal{E})\}$\label{line:cons-recompute-E-of-H}\;
      $H = (V,E)$\label{line:cons-revise-H}\;
      $\mathcal{E} =
      \textsc{HTQ}(H,\,\pi)$\label{line:cons-confirm-revision}
      \tcp*{$\mathcal{E}$ is always $\emptyset$.}
    }
  }
\end{algorithm}

This section presents such a learner.  The design of this learner is
based on the well-known principle of \Dfn{Occam's Razor}
\cite[Ch.~2]{Kearns1994}: the learner
strives to reuse the simple summary that it has learned so far, until
external feedback forces it to abandon the existing summary for a more
complex one. Operationally, it means that the learner always assumes
that the new vertex returned by the teacher is indistinguishable from
some previously seen vertex, until errors prove that they are in fact
distinguishable.

Why would this presumption of indistinguishability reduce
administration cost?  While the number of entities (vertices in
digraph $G$) may be infinite, the number of protection domains
(vertices in the summary $H$) is relatively small. Once the learner
has seen a sample entity in an equivalence class, all the future
entities of the same equivalence class look the same: they share the
same adjacency pattern as the sample. After the learner has learned
all the equivalence classes, no new adjacency patterns need to be
learned.  The remaining learning process is simply a matter of
classifying new entities into one of the known equivalence classes. As
we shall see in \S \ref{sec:conservative-complexity}, this latter task
of classification requires only a number of \textsc{CNQ}s that is a
function of the number of protection domains rather than the number of
vertices seen.  The administration cost is therefore reduced
significantly.

This new learner is called the \Dfn{Conservative Learner} (Algorithm
\ref{algo:conservative-learner}).  Here we outline the high-level
ideas, and leave the details to the rest of the section.
\begin{asparaenum}
\item In the beginning of each round, the teacher returns a new vertex
  $u$ via an \textsc{NVQ} (line \ref{line:cons-NVQ}). Rather than
  asking \textsc{CNQ}s exhaustively to uncover the adjacency between
  $u$ and the existing vertices, the learner acts
  ``\emph{conservatively}\/'': It assumes that $u$ is
  indistinguishable from some existing vertex.
\item It uses a classifier to classify $u$ into one of the known
  equivalence classes (line \ref{line:cons-classify}).  That
  classifier is a decision tree $\mathcal{T}$. The decision nodes of
  $\mathcal{T}$ correspond to \textsc{CNQ}s that must be invoked in
  order to obtain a classification. Since the number of equivalence
  classes is assumed to be small, $\mathcal{T}$ is small, and thus the
  number of \textsc{CNQ}s required to classify $u$ is significantly
  smaller than the exhaustive discovery of adjacency.
\item The classification result allows the learner to extend $\pi$ by
  assigning $u$ to an existing protection domain (line
  \ref{line:cons-extend-pi}).  (The notation $\pi[u \mapsto w]$
  denotes a function $f$ such that $f(x) = w$ if $x = u$, and
  $f(x) = \pi(x)$ otherwise.)  $H$ remains the same.
\item Of course, the assumption that the new vertex $u$ is
  indistinguishable from a previously seen vertex may or may not be
  true. That is why the learner employs the \textsc{HTQ} to confirm this
  (line \ref{line:cons-first-confirm}). If no errors are
  returned, then the bet pays off (line \ref{line:cons-no-revision}).
  The premise is that, after enough equivalence classes have been
  discovered, this case is the dominant case.
\item If the teacher returns actual errors, then
  the decision tree $\mathcal{T}$
  and the working policy $(H, \pi)$ are revised
  to eliminate the errors (lines
  \ref{line:cons-revise-tree}--\ref{line:cons-confirm-revision}).
\end{asparaenum}

A detailed exposition of Algorithm \ref{algo:conservative-learner} is
given below. First, we introduce decision trees (\S
\ref{sec:conservative-trees}). We then examine how equivalence classes
evolve as new vertices are revealed by the teacher (\S
\ref{sec:conservative-equivalence}). This prepares us to understand
the revision of the decision tree and the working policy (\S
\ref{sec:tree-revision}). Lastly, we assess the correctness (\S
\ref{sec:conservative-invariant}) and administration cost (\S
\ref{sec:conservative-complexity}) of the Conservative Learner.

\begin{algorithm}[t]
  \caption{\textsc{Classify}($\mathcal{T}$, $u$)\label{Algo:Classify}}
  \lIf{$\mathcal{T}$ is a leaf}{\Return{$\ell(\mathcal{T})$}}
  \Switch{$\mathit{test}(\mathcal{T})$}{
    \Case{$\mathit{loop}(a)$}{
      \If{\textsc{CNQ}($u$, $a$, $u$)}{
        \Return
        \textsc{Classify}($\mathit{left}(\mathcal{T})$, $u$)\;
      }
      \lElse{\Return
        \textsc{Classify}($\mathit{right}(\mathcal{T})$, $u$)}
    }     
    \Case{$\mathit{to}(a, v)$}{
      \If{\textsc{CNQ}($u$, $a$, $v$)}{
        \Return
        \textsc{Classify}($\mathit{left}(\mathcal{T})$, $u$)\;
      }
      \lElse{\Return
        \textsc{Classify}($\mathit{right}(\mathcal{T})$, $u$)}
    }
    \Case{$\mathit{from}(v, a)$}{
      \If{\textsc{CNQ}($v$, $a$, $u$)}{
        \Return
        \textsc{Classify}($\mathit{left}(\mathcal{T})$, $u$)\;
      }
      \lElse{\Return
        \textsc{Classify}($\mathit{right}(\mathcal{T})$, $u$)}
    }     
  }
\end{algorithm}

\begin{algorithm}[t]
  \caption{\textsc{Edg}($u$, $a$, $v$,
    $H$, $\pi'$, $\mathcal{E}$)\label{algo:edge}}
  \Return
  $\,(\pi'(u),a,\pi'(v))
  \in E(H)
  \,\,\,\operatorname{xor}\,\,\,
  (u,a,v)
  \in \mathcal{E}$\;
\end{algorithm}

\subsection{Decision Trees}
\label{sec:conservative-trees}

The Conservative Learner presumes that a new vertex $u$ returned by
the teacher is indistinguishable from an already seen vertex.  The
learner then employs a decision tree $\mathcal{T}$ to classify $u$ to
an existing protection domain, hoping that the summary $H$ does not
need to be revised.  In short, a decision tree captures the heuristic
knowledge of the Conservative Learner.  We introduce the structure and
semantics of a decision tree in the following.

\begin{definition}[Decision Tree]
  Suppose $G$ is a digraph. A \Dfn{decision tree $\mathcal{T}$ (for
    $G$)} is a finite binary tree defined as follows:
  \begin{itemize}
  \item A decision tree $\mathcal{T}$ is either a \Dfn{leaf} or
    a \Dfn{decision node}.
  \item If $\mathcal{T}$ is a leaf, then it has a \Dfn{label}
    $\ell(\mathcal{T})$, which is a vertex in $G$.
  \item If $\mathcal{T}$ is a decision node,
    then it has a \Dfn{test} $\mathit{test}(\mathcal{T})$,
    a \Dfn{left subtree} $\mathit{left}(\mathcal{T})$,
    and a \Dfn{right subtree} $\mathit{right}(\mathcal{T})$.
    Both $\mathit{left}(\mathcal{T})$ and
    $\mathit{right}(\mathcal{T})$ are decision trees.
    The test $\mathit{test}(\mathcal{T})$ has one of the
    following three forms:
    \begin{itemize}
    \item $\mathit{loop}(a)$, where $a \in \Sigma$,
    \item $\mathit{to}(a, v)$, where $a \in \Sigma$ and
      $v \in V(G)$, or
    \item $\mathit{from}(v, a)$, where $v \in V(G)$ and $a \in
      \Sigma$.
    \end{itemize}
    Intuitively, $\mathit{test}(\mathcal{T})$ prescribes a test to be
    performed, and the left and right subtree represent respectively
    the ``yes''-branch and ``no''-branch of the test.
  \end{itemize}
\end{definition}
A decision tree $\mathcal{T}$ can be used for classifying vertices
from $G$.  Specifically, Algorithm \ref{Algo:Classify} specifies the
semantics of decision trees: the algorithm classifies a vertex $u$ of
$G$ as one of the leaf labels of $\mathcal{T}$. The process involves
invoking \textsc{CNQ}s.

The intention is that each leaf of $\mathcal{T}$ corresponds to an
equivalence class induced by $\equiv_G$. If a leaf $n$ corresponds to
an equivalence class $\mathcal{C}$ of $\equiv_G$, then the label
$\ell(n)$ is a member of $\mathcal{C}$. This vertex $\ell(n)$ is known
as the \Dfn{representative} of $\mathcal{C}$. In short, a decision
tree $\mathcal{T}$ classifies a vertex $u$ of $G$ to the
representative of the equivalence class to which $u$ belongs.

\subsection{Evolution of Equivalence Classes}
\label{sec:conservative-equivalence}

The Conservative Learner tracks a summary $H$ of $G[U]$, where $U$ is
the (finite) set of vertices returned so far by the teacher.  Each
vertex of $H$ is essentially a representative of an equivalence class
induced by $\equiv_{G[U]}$.  Here we make the following inquiry:
\emph{As the teacher returns more and more vertices (i.e., as $U$
  becomes bigger and bigger), how will the equivalence classes change
  accordingly?}  Answers to this question will help us better
understand the process by which decision trees and summary graphs are
revised (lines
\ref{line:cons-revise-tree}--\ref{line:cons-revise-H}).

The first observation is that distinguishable vertices remain
distinguishable as more and more vertices are revealed by the
teacher. 
\begin{proposition} \label{prop:dist_vertices}
  Suppose $G$ is a digraph and $U_1 \subseteq U_2 \subseteq V(G)$.
  Let $G_1$ be $G[U_1]$, $G_2$ be $G[U_2]$, $\equiv_1$ be
  $\equiv_{G_1}$, and $\equiv_2$ be $\equiv_{G_2}$.  Then for
  $x,y \in U_1$, $x \not\equiv_1 y$ implies $x \not\equiv_2 y$.
\end{proposition}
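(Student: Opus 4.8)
The plan is to prove the statement directly by exhibiting a witness, which amounts to the observation that \emph{indistinguishability is preserved downward}: enlarging the vertex set from $U_1$ to $U_2$ can never merge two classes that were already separated. The crux is that $\equiv_G$ is defined purely through adjacencies, and that passing from the induced subgraph $G_1$ to the larger induced subgraph $G_2$ neither disturbs the adjacencies among $U_1$-vertices nor discards any witness that already separates $x$ and $y$.

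First I would record the structural fact that $G_1 = G_2[U_1]$. Since $U_1 \subseteq U_2 \subseteq V(G)$ and both $G_1$ and $G_2$ are induced subgraphs of $G$, an edge $(p,a,q)$ with $p,q \in U_1$ lies in $E(G_1)$ iff it lies in $E(G)$ iff it lies in $E(G_2)$. Consequently, for any $x,z \in U_1$ and $a \in \Sigma$ the memberships governing $(x,a,z)$ and $(z,a,x)$ agree between $G_1$ and $G_2$, so $\mathit{adj}_{G_1}(x,z) = \mathit{adj}_{G_2}(x,z)$.

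Next I would invoke the adjacency characterization of $\equiv_G$ recalled in \S\ref{sec:review}, namely that $x \equiv_G y$ iff $\mathit{adj}_G(x,z) = \mathit{adj}_G(y,z)$ for every $z \in V(G)$. Assuming $x \not\equiv_1 y$, this characterization supplies a witness $z \in U_1$ with $\mathit{adj}_{G_1}(x,z) \neq \mathit{adj}_{G_1}(y,z)$. Applying the identity from the previous step to both $x$ and $y$ converts this into $\mathit{adj}_{G_2}(x,z) \neq \mathit{adj}_{G_2}(y,z)$. Since $z \in U_1 \subseteq U_2 = V(G_2)$, this same $z$ witnesses $x \not\equiv_2 y$, which is exactly the desired conclusion.

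I do not expect a genuine obstacle here; the entire content is that $\equiv_G$ is \emph{local} (it inspects only adjacency patterns) and \emph{monotone in the vertex set} (enlarging $U$ can add new vertices at which $x$ and $y$ already differ, but never erase an existing disagreement). The one point deserving care is that the witness $z$ may coincide with $x$ or $y$: the adjacency characterization already folds the self-loop and mutual-edge bookkeeping of condition~(1) into the cases $z = x$ and $z = y$, so no separate handling of condition~(1) versus condition~(2) is required, and the argument above covers all cases uniformly.
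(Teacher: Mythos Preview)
Your proof is correct and captures exactly the idea the paper uses: indistinguishability in $G_2$ is a stronger condition because it quantifies over the larger vertex set $U_2 \supseteq U_1$, while adjacencies among $U_1$-vertices are unchanged. The only cosmetic difference is that the paper phrases the argument as the contrapositive ($x \equiv_2 y \Rightarrow x \equiv_1 y$) in a single sentence, whereas you argue directly by transporting the separating witness $z \in U_1$ from $G_1$ to $G_2$; the underlying content is the same.
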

\begin{proof}
  We prove the contrapositive: $x \equiv_2 y$ implies $x \equiv_1
  y$. Note that $G_2$ contains all the vertices of $G_1$.  Thus,
  according to the definition of $\equiv_G$ in \S \ref{sec:review},
  the requirements of indistinguishability is stronger in $G_2$ than
  in $G_1$.
\end{proof}
Once two vertices are found to be distinguishable, they remain so
throughout the rest of the learning process. In other words,
equivalence classes do not ``merge with one another'' or ``bleed into
one another.''

The revelation of new vertices may cause two previously
indistinguishable vertices to become distinguishable.  This occurs
only when the new vertex contains genuinely new structural information
about $G$.  Otherwise, equivalence classes remain the same.  This
observation is formalized in the following proposition.
\begin{proposition} \label{prop:indist_vertices}
  Suppose $G$ is a digraph, $U_1 \subseteq V(G)$,
  $u \in V(G) \setminus U_1$, and $U_2 = U_1 \cup \{u\}$.  Let $G_1$
  be $G[U_1]$, $G_2$ be $G[U_2]$, $\equiv_1$ be $\equiv_{G_1}$, and
  $\equiv_2$ be $\equiv_{G_2}$.  Suppose further that
  $u \equiv_2 v$ for some $v \in U_1$.  Then for every $x, y \in U_1$,
  $x \equiv_1 y$ implies $x \equiv_2 y$.
\end{proposition}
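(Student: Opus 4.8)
The plan is to work entirely through the $\mathit{adj}_G$ characterization of indistinguishability recalled in \S\ref{sec:review}, namely that $x \equiv_G y$ holds iff $\mathit{adj}_G(x,z) = \mathit{adj}_G(y,z)$ for every $z \in V(G)$, equivalently iff $\mathit{adj}_G(z,x) = \mathit{adj}_G(z,y)$ for every such $z$. This formulation folds the self-loop clause and the four-edge clause of the definition into a single uniform statement, so the argument can sidestep a tedious case analysis on whether any of $x$, $y$, $v$ coincide.

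First I would record the obvious compatibility between the two induced subgraphs: since $G_1 = G[U_1]$ and $G_2 = G[U_2]$ agree on every edge whose two endpoints both lie in $U_1$, we have $\mathit{adj}_{G_2}(p,q) = \mathit{adj}_{G_1}(p,q)$ whenever $p, q \in U_1$. Fixing $x, y \in U_1$ with $x \equiv_1 y$, the goal reduces to verifying $\mathit{adj}_{G_2}(x,z) = \mathit{adj}_{G_2}(y,z)$ for every $z \in U_2 = U_1 \cup \{u\}$.

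The case $z \in U_1$ is immediate: by the compatibility above both sides equal the corresponding $\mathit{adj}_{G_1}$ values, and these agree because $x \equiv_1 y$. The only genuinely new obligation is the case $z = u$, and this is where the hypothesis $u \equiv_2 v$ does the work. Applying the second form of the characterization to $u \equiv_2 v$ with common first argument gives $\mathit{adj}_{G_2}(x,u) = \mathit{adj}_{G_2}(x,v)$ and $\mathit{adj}_{G_2}(y,u) = \mathit{adj}_{G_2}(y,v)$. Since $x, y, v \in U_1$, compatibility lets me rewrite the right-hand sides as $\mathit{adj}_{G_1}(x,v)$ and $\mathit{adj}_{G_1}(y,v)$, and these two are equal because $x \equiv_1 y$ and $v \in U_1$. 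Chaining the equalities yields $\mathit{adj}_{G_2}(x,u) = \mathit{adj}_{G_2}(y,u)$, which closes the remaining case and hence establishes $x \equiv_2 y$.

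The conceptual crux, and the only place where the hypothesis enters, is the reduction in the $z = u$ case: the assumption that the freshly revealed vertex $u$ is indistinguishable from an already-seen vertex $v$ is precisely what lets one ``see through'' $u$ as a copy of $v$, so that $u$ carries no adjacency information capable of separating $x$ from $y$. I expect no substantive obstacle beyond ensuring that the degenerate configurations (for instance $x = v$, or self-loops at $u$) are covered; the $\mathit{adj}_G$ characterization handles all of these automatically, which is exactly why I prefer it to a clause-by-clause unwinding of the definition of $\equiv_G$.
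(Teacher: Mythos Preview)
Your proposal is correct and essentially identical to the paper's own proof: both use the $\mathit{adj}_G$ characterization of indistinguishability, split on whether $z \in U_1$ or $z = u$, handle the first case by the fact that induced-subgraph adjacencies among $U_1$-vertices are unchanged, and handle the second by using $u \equiv_2 v$ to replace $u$ by $v$ and then invoking $x \equiv_1 y$. Your explicit remark that the $\mathit{adj}_G$ formulation absorbs the degenerate coincidences (e.g.\ $x=v$, self-loops) is a small expository plus, but the argument is the same.
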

Recall the definition and properties of the notation
$\mathit{adj}_G(x,y)$ in \S \ref{sec:review} as they are used heavily
in the following proof.
\begin{proof}
  Assume $x \equiv_1 y$, we show that $x \equiv_2 y$.  To that end,
  consider a vertex $z$ in $G_2$. We show that
  $\mathit{adj}_{G_2}(x, z) = \mathit{adj}_{G_2}(y, z)$.  There are
  two cases.
  \begin{enumerate}
  \item \textbf{Case $z \in U_1$}: The adjacency among vertices in
    $G_1$ remains the same in $G_2$.  Since $x \equiv_1 y$, we know
    that
    $\mathit{adj}_{G_2}(x, z) = \mathit{adj}_{G_1}(x,z) =
    \mathit{adj}_{G_1}(y, z) = \mathit{adj}_{G_2} (y,z)$.
  \item \textbf{Case $z = u$}: Since $u \equiv_2 v$, we have
    $\mathit{adj}_{G_2} (x,u) = \mathit{adj}_{G_2} (x,v)$ and
    $\mathit{adj}_{G_2} (y, u) = \mathit{adj}_{G_2}(y,v)$.  But $x$,
    $y$, and $v$ are all vertices from $G_1$, adjacency among them
    remains the same in $G_2$, and thus
    $\mathit{adj}_{G_2}(x,v) = \mathit{adj}_{G_1}(x,v)$ and
    $\mathit{adj}_{G_2}(y,v) = \mathit{adj}_{G_1}(y,v)$.  Since
    $x \equiv_1 y$,
    $\mathit{adj}_{G_1}(x,v) = \mathit{adj}_{G_1}(y,v)$.  Therefore,
    $\mathit{adj}_{G_2}(x,u) = \mathit{adj}_{G_2}(y, u)$.
  \end{enumerate}
  In other words, $\mathit{adj}_{G_2}(x,z) = \mathit{adj}_{G_2}(y,z)$
  for arbitrary $z \in V(G_2)$.  We thus conclude that $x \equiv_2 y$.
\end{proof}

A number of implications follow from Proposition
\ref{prop:indist_vertices}:
\begin{enumerate}
\item When the teacher returns a new vertex $u$ that is
  indistinguishable from a previously seen vertex $v$, the equivalence
  classes do not change (except for $u$ to join the equivalence class
  of $v$).  We shall see that this is the dominant case (\S
  \ref{sec:conservative-complexity}).
\item Otherwise, the new vertex $u$ is distinguishable from every
  other known vertex, and thus $u$ belongs to a new equivalence class
  for which it is the only member.  We call $u$ a \Dfn{novel vertex}.
\item The revelation of a novel vertex could cause previously
  indistinguishable vertices to become distinguishable. By Proposition
  \ref{prop:dist_vertices}, such changes take the form of
  \emph{splitting} an existing equivalence class into multiple
  equivalence classes. This will explain why we later on perform
  ``\Dfn{splitting}'' when we revise a decision tree (\S
  \ref{sec:tree-revision}).
\end{enumerate}

\subsection{Revision of Decision Tree and Working Policy}
\label{sec:tree-revision}

We have seen how $\equiv_{G[U]}$ induces equivalence classes of
vertices. In fact, the function $\pi$ also induces a partitioning of
the vertex set $U$.  Specifically, every $x \in \Ran{\pi}$ defines a
vertex partition
$\mathcal{C}(x) = \{ v \in \Dom{\pi} \mid \pi(v) = x \}$.  It is
intended that the vertex partitions induced by $\pi$ are identical to
the equivalence classes induced by $\equiv_{G[U]}$. Now suppose the
\textsc{NVQ} on line \ref{line:cons-NVQ} returns a novel vertex
$u$. This means $u$ is not equivalent to any previously seen vertex
(second implication of Proposition \ref{prop:indist_vertices}).
Consequently, when the decision tree $\mathcal{T}$ classifies $u$ to a
previously seen vertex $w$ on line \ref{line:cons-classify}, the
classification is incorrect.  In other words, the vertex partitions
induced by $\pi' = \pi[u\mapsto w]$ (line \ref{line:cons-extend-pi})
becomes ``out of sync'' with the equivalence classes induced by
$\equiv_{G[U]}$.  Not only that, the digraph $H$ is no longer a
summary of $G[U]$ after the novel vertex is added to $U$.  Such
discrepancies will be detected on line \ref{line:cons-detect-error}
and then fixed on lines
\ref{line:cons-revise-tree}--\ref{line:cons-revise-H}. After that, the
\textsc{HTQ} on line \ref{line:cons-confirm-revision} will return an
empty set of errors. A detailed exposition of lines
\ref{line:cons-revise-tree}--\ref{line:cons-revise-H} is given below.

\subsubsection{Revision in Algorithm \ref{algo:conservative-learner}}
Line \ref{line:cons-revise-tree} of Algorithm
\ref{algo:conservative-learner} invokes the subroutine \textsc{Revise}
to fix the decision tree $\mathcal{T}$ and the domain assignment
$\pi$. As a result the vertex partitions induced by $\pi$ will be
``synchronized'' with the equivalence classes induced by $G[U]$. (A
detailed explanation of \textsc{Revise} will be given in \S
\ref{sec:revision-in-revise}.)

With $\mathcal{T}$ and $\pi$ now fixed, lines
\ref{line:cons-recompute-V-of-H}--\ref{line:cons-revise-H} revise $H$
so that it is a summary of $G[U]$.  The new vertex set $V$ is simply
the range of the updated domain assignment $\pi$ (line
\ref{line:cons-recompute-V-of-H}). Since $V \subseteq U$, line
\ref{line:cons-recompute-E-of-H} sets the edge set $E$ to contain the
edges in $G[U]$ among the vertices in $V$.  Given the conservatively
extended policy $(H, \pi')$ and its error set $\mathcal{E}$, Algorithm
\ref{algo:edge} is invoked to check if an edge $(u, a, v)$ is in
$G[U]$. Note that no invocation of the \textsc{CNQ} is involved
here. An edge $(u,a,v)$ is in $G[U]$ if and only if either (a) policy
$(H, \pi')$ grants request $(u,a,v)$ and $(u,a,v)$ is not an error, or
(b) policy $(H, \pi')$ denies request $(u,a,v)$ and $(u,a,v)$ is an
error.  The check is expressed as an exclusive-or in Algorithm
\ref{algo:edge}.

\begin{algorithm}[t]
  \caption{\textsc{Revise}($\mathcal{T}$,
    $H$, $\pi$, $u$, $w$, $\mathcal{E}$)\label{alg:revise-tree}}
  \KwIn{$H$, $\pi$, and $\mathcal{T}$ satisfy \textbf{INV-1},
    \textbf{INV-2} and \textbf{INV-3} (see \S
    \ref{sec:conservative-invariant} for the definition of these
    conditions).  Then \textsc{NVQ} returns $u$, $\mathcal{T}$
    classifies $u$ to $w \in V(H)$, and
    $\textsc{HTQ}(H,\,\pi[u\mapsto w])$ returns a non-empty set
    $\mathcal{E}$.  } \KwOut{$(\mathcal{T}^\circ, \pi^\circ)$, where
    $\pi^\circ$ and $\mathcal{T}^\circ$ are updated versions of $\pi$
    and $\mathcal{T}$ that satisfy \textbf{INV-2}(a) and
    \textbf{INV-3}.}  let $\pi' = \pi[u \mapsto w]$ and
  $\pi^\circ = \pi'$\label{line:revise-init-first}\; let
  $\mathcal{T}^\circ = \mathcal{T}$\label{line:revise-init-tree}\; let
  $\mathit{WL}$ be the set of all leaves in
  $\mathcal{T^\circ}$\label{line:revise-init-last}\;
  \While{$\mathit{WL} \neq \emptyset$\label{line:rev-while}}{
    remove a
    leaf $n$ from $\mathit{WL}$\label{line:revise-pop-leaf}\;
    $\mathcal{C} = \{\, v \in \Dom{\pi^\circ} \mid \pi^\circ(v) =
    \ell(n)\,\}$\label{line:revise-org-part}\;
    $t = \textsc{null}$\;
    \If{$\exists v_1, v_2 \in \mathcal{C} \,.\, v_1 \neq v_2 \land
      \linebreak
      \mbox{\qquad} (v_1, a, u) \in \mathcal{E} \land (v_2,
      a, u) \not\in \mathcal{E}$\label{line:revise-check-to}}{
      $t = \mathit{to}(a,u)$\label{line:revise-test-to}\;
      $V^+ = \{ v \in \mathcal{C} \mid \textsc{Edg}(v, a, u, H,
      \pi', \mathcal{E}) \}$\label{line:revise-part-to-plus}\;
      $V^- = \{ v \in \mathcal{C} \mid \lnot \textsc{Edg}(v, a, u, H,
      \pi', \mathcal{E}) \}$\label{line:revise-part-to-minus}\; }
    \ElseIf{$\exists v_1, v_2 \in \mathcal{C} \,.\, v_1 \neq v_2 \land
      \linebreak
      \mbox{\qquad\qquad} (u,a,v_1) \in \mathcal{E} \land
      (u,a,v_2) \not\in \mathcal{E}$\label{line:revise-check-from}}{
      $t = \mathit{from}(u,a)$\label{line:revise-test-from}\;
      $V^+ = \{ v \in \mathcal{C} \mid \textsc{Edg}(u, a, v, H,
      \pi', \mathcal{E}) \}$\label{line:revise-part-from-plus}\;
      $V^- = \{ v \in \mathcal{C} \mid \lnot \textsc{Edg}(u, a, v, H,
      \pi', \mathcal{E}) \}$\label{line:revise-part-from-minus}\;
    }
    \ElseIf{$u \in \mathcal{C} \land (u,a,u) \in \mathcal{E} \land
      \linebreak
      \mbox{\qquad\qquad} \exists v \in \mathcal{C} \,.\,
      (v,a,v) \not\in \mathcal{E}$\label{line:revise-check-loop}}{
      $t = \mathit{loop}(a)$\label{line:revise-test-loop}\;
      $V^+ = \{ v \in \mathcal{C} \mid \textsc{Edg}(v, a, v, H,
      \pi', \mathcal{E}) \}$\label{line:revise-part-loop-plus}\;
      $V^- = \{ v \in \mathcal{C} \mid \lnot \textsc{Edg}(v, a, v, H,
      \pi', \mathcal{E}) \}$\label{line:revise-part-loop-minus}\;
    }
    \If{$t \neq \textsc{null}$}{
      modify $\mathcal{T}^\circ$ by
      replacing leaf $n$ with a decision node $n'$, so that
      $\mathit{test}(n') = t$, $\mathit{left}(n') = n^+$, and
      $\mathit{right}(n') = n^-$\label{line:revise-update-T}\;
      set $\ell(n^+)$ to be some vertex $v^+$ in
      $V^+$\label{line:revise-set-leaf-plus}\;
      set $\ell(n^-)$ to be
      some vertex $v^-$ in $V^-$\label{line:revise-set-leaf-minus}\;
      update $\pi^\circ$ so that $\pi^\circ(x) = v^+$ for $x \in V^+$,
      and $\pi^\circ(x) = v^-$ for $x \in V^-$, and $\pi^\circ(x)$ is
      unchanged if
      $x \not\in \mathcal{C}$\label{line:revise-update-pi}\;
      $\mathit{WL} = \mathit{WL} \cup \{ n^+, n^-
      \}$\label{line:revise-push-leaf}\;
    }
  }
  \Return $(\mathcal{T}^\circ, \pi^\circ)$\;
\end{algorithm}

\subsubsection{Revision in Algorithm \ref{alg:revise-tree}}
\label{sec:revision-in-revise}
Algorithm \ref{alg:revise-tree} is designed to revise
$\pi' = \pi[u\mapsto w]$ to a new function $\pi^\circ$ so that
$\pi^\circ$ and $\equiv_{G[U]}$ are synchronized again. Along the way,
$\mathcal{T}$ is updated to a new decision tree $\mathcal{T}^\circ$
that produces the same classification as $\pi^\circ$.  Let us examine
Algorithm \ref{alg:revise-tree} line by line.

Initially, $\pi^\circ = \pi[u\mapsto w]$ and
$\mathcal{T}^\circ = \mathcal{T}$ (lines
\ref{line:revise-init-first}--\ref{line:revise-init-tree}).
Propositions \ref{prop:dist_vertices} and \ref{prop:indist_vertices}
tell us that, while some vertex partitions induced by $\pi^\circ$
remain identical to equivalence classes induced by $\equiv_{G[U]}$,
other vertex partitions become the union of multiple equivalence
classes. In particular, the equivalence class of the novel vertex $u$
is a singleton set, and it is a proper subset of $\mathcal{C}(w)$.
Algorithm \ref{alg:revise-tree} revises $\pi^\circ$ incrementally. In
each iteration, a vertex partition
$\mathcal{C} = \mathcal{C}(\ell(n))$ for some leaf $n$ is considered
(line \ref{line:revise-org-part}).  The algorithm attempts to detect
if $\mathcal{C}$ contains two distinguishable vertices $v_1$ and
$v_2$.  It does so by detecting a discrepancy in adjacency: e.g., one
of $(v_1, a, u)$ or $(v_2, a, u)$ belongs to $E(G[U])$ but not both.
If such a distinguishable pair exists in $\mathcal{C}$, then
$\mathcal{C}$ is split into two non-empty partitions $V^+$ and $V^-$
(lines
\ref{line:revise-part-to-plus}--\ref{line:revise-part-to-minus},
\ref{line:revise-part-from-plus}--\ref{line:revise-part-from-minus},
and
\ref{line:revise-part-loop-plus}--\ref{line:revise-part-loop-minus}),
and $\pi^\circ$ is updated to reflect this split (line
\ref{line:revise-update-pi}). This brings the partitions induced by
$\pi^\circ$ one step closer to mirroring the equivalence classes
induced by $\equiv_{G[U]}$.

Note that the detection of discrepancies in adjacency does not rely on
issuing \textsc{CNQ}s. Instead, they are discovered by recognizing
\emph{discrepancies in errors} (lines \ref{line:revise-check-to},
\ref{line:revise-check-from}, \ref{line:revise-check-loop}).  For
example, if exactly one of $(v_1, a, u)$ and $(v_2, a, u)$ is in
$E(G[U])$ (a discrepancy in adjacency), then exactly one of
$(v_1, a, u)$ and $(v_2, a, u)$ is in $\mathcal{E}$ (a discrepancy of
errors). This explains why the check on line
\ref{line:revise-check-to} is designed as such.

The decision tree $\mathcal{T}^\circ$ is also updated so that it
produces the same classification as $\pi^\circ$.  Specifically, when
$\mathcal{C}$ is split, the corresponding leaf in $\mathcal{T}^\circ$
is turned into a decision node with two children leaves (lines
\ref{line:revise-update-T}--\ref{line:revise-set-leaf-minus}).  The
test $t$ of the new decision node $n'$ is selected to reflect the way
$\mathcal{C}$ is partitioned into $V^+$ and $V^-$ (lines
\ref{line:revise-test-to}, \ref{line:revise-test-from}, and
\ref{line:revise-test-loop}).

Algorithm \ref{alg:revise-tree} maintains a \Dfn{work list}
$\mathit{WL}$ that tracks vertex partitions that could potentially be
split.  More precisely, $\mathit{WL}$ contains a leaf $n$ if and only
if the partition $\mathcal{C}(\ell(n))$ is a candidate for splitting.
Initially, $\mathit{WL}$ contains all leaves (line
\ref{line:revise-init-last}). One leaf is removed for consideration in
each iteration (line \ref{line:revise-pop-leaf}).  If new leaves are
produced due to splitting, they are added to $\mathit{WL}$ (line
\ref{line:revise-push-leaf}).  The algorithm terminates when the
work list $\mathit{WL}$ becomes empty (line \ref{line:rev-while}).

\subsection{Successful Learning}
\label{sec:conservative-invariant}

We are now ready to demonstrate that the Conservative Learner
(Algorithm \ref{algo:conservative-learner}) is successful (Definition
\ref{def:successful}).  We begin by stating the loop invariants of the
main while-loop (line \ref{line:cons-main-loop}). In the following,
$G$ is the countably infinite digraph encapsulated behind the teacher,
and $U$ is the set of vertices that that have been returned through
\textsc{NVQ}s so far.  (Note that $G[U]$ is finite even though $G$ is
infinite.)

\textbf{INV-1.} $H$ is both a summary and a subgraph of $G[U]$.  (In
other words,
$V(H)$ is the set of representatives of the equivalence classes.)
  
\textbf{INV-2.} The domain assignment function $\pi$
satisfies the following conditions: (a) for every $u,v \in U$,
$u \equiv_{G[U]} v$ if and only if $\pi(u) = \pi(v)$; (b)
$\Ran{\pi} = V(H)$. (In English, $\pi$ maps a vertex $v \in U$ to the
representative of the equivalence class to which $v$ belongs.)
  
\textbf{INV-3.} $\mathcal{T}$ is a decision tree for $G$ such that (a)
for every $v \in U$, $\textsc{Classify}(\mathcal{T}, v) = \pi(v)$, and
(b) the number of leaves in $\mathcal{T}$ is $|\Ran{\pi}|$. (In
English, $\mathcal{T}$ and $\pi$ provide the same classification for
vertices in $U$, and each leaf corresponds to a representative.)

We need a technical lemma concerning the correctness of
\textsc{Revise} before we can establish that the conditions above are
indeed the loop invariants of Algorithm
\ref{algo:conservative-learner}.

\begin{lemma}
  \label{lemma:tree-revision}
  Suppose $H$, $\pi$, and $\mathcal{T}$ satisfy \textbf{INV-1},
  \textbf{INV-2}, and \textbf{INV-3} at the beginning of the
  while-loop of Algorithm \ref{algo:conservative-learner}.  Suppose
  further that vertex $u$ is returned from \textsc{NVQ} on line
  \ref{line:cons-NVQ}, $\textsc{Classify}(\mathcal{T}, u)$ returns
  representative $w$ on line \ref{line:cons-classify}, and
  $\textsc{HTQ}(H, \pi[u\mapsto w])$ returns a non-empty set
  $\mathcal{E}$ of errors on line \ref{line:cons-first-confirm}.  Then
  $\textsc{Revise}(\mathcal{T}, H, \pi, \linebreak u, w, \mathcal{E})$
  returns a pair $(\mathcal{T}, \pi)$ that satisfy \textbf{INV-2}(a)
  and \textbf{INV-3}.
\end{lemma}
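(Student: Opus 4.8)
The plan is to prove the lemma by exhibiting a loop invariant for the while-loop of \textsc{Revise} (Algorithm \ref{alg:revise-tree}) and showing that, upon termination, this invariant yields \textbf{INV-2}(a) and \textbf{INV-3}. Before entering the loop I would record two preliminary facts. First, a correctness statement for \textsc{Edg}: since the input $\mathcal{E}$ equals $\mathcal{E}(G[U], H, \pi')$, where $U$ now includes the freshly returned $u$, the exclusive-or in Algorithm \ref{algo:edge} evaluates to $\mathit{true}$ exactly when $(x,a,y) \in E(G[U])$; this is what lets \textsc{Revise} read off adjacencies in $G[U]$ without any \textsc{CNQ}. Second, I would show that $u$ is \emph{novel}: were $u$ indistinguishable in $G[U]$ from some previously seen vertex, then (using \textbf{INV-1}, \textbf{INV-2}, and the fact that $\mathcal{T}$ must classify $u$ exactly as it classifies that vertex) the conservatively extended policy $(H, \pi[u\mapsto w])$ would enforce $G[U]$ and \textsc{HTQ} would return $\emptyset$, contradicting the hypothesis. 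The same enforcement bookkeeping shows that every error in $\mathcal{E}$ is incident to $u$, and in particular that $(v,a,v) \notin \mathcal{E}$ for every old vertex $v$ --- the fact that justifies the asymmetric shape of the $\mathit{loop}$ test on line \ref{line:revise-check-loop}.

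The loop invariant I would carry is a conjunction. (i) Every partition $\mathcal{C}(\ell(n))$ induced by the current $\pi^\circ$ is a union of $\equiv_{G[U]}$-classes (equivalently, $x \equiv_{G[U]} y$ implies $\pi^\circ(x) = \pi^\circ(y)$), so no split ever separates two equivalent vertices. (ii) All members of any one partition share a common value of the \emph{original} map $\pi'$, so that for a fixed target the policy decision is constant across the partition; hence a discrepancy in $\mathcal{E}$ is equivalent to a discrepancy in $G[U]$-adjacency, which is precisely what makes the error-based tests on lines \ref{line:revise-check-to}, \ref{line:revise-check-from}, \ref{line:revise-check-loop} faithful detectors of distinguishability. (iii) $\mathcal{T}^\circ$ and $\pi^\circ$ induce the same classification of $U$ and the number of leaves equals $|\Ran{\pi^\circ}|$ (this is \textbf{INV-3}). (iv) $\mathit{WL}$ contains a leaf iff its partition has not yet been certified unsplittable. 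Showing each loop body preserves (i)--(iv) is largely mechanical: a split replaces one leaf by a decision node with two fresh leaves and repartitions $\mathcal{C}$ into the non-empty cells $V^+, V^-$ defined through \textsc{Edg}, so (iii) is maintained with a net gain of one leaf and one domain; (ii) is inherited because $V^+, V^- \subseteq \mathcal{C}$; and (i) holds because the chosen test reflects a genuine $G[U]$-adjacency difference, which equivalent vertices cannot exhibit.

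Termination follows from (i): each split strictly refines the partition, the partition can never be finer than the $\equiv_{G[U]}$-classes, and between splits $|\mathit{WL}|$ strictly decreases. The substantive step --- and the one I expect to be the main obstacle --- is the completeness direction needed to upgrade (i) to full \textbf{INV-2}(a): when the loop halts, every partition is a \emph{single} $\equiv_{G[U]}$-class, i.e. $\pi^\circ(x) = \pi^\circ(y)$ implies $x \equiv_{G[U]} y$. The plan is to take a partition $\mathcal{C}$ on which no test fires and argue any two of its members are indistinguishable. For a pair of old vertices this is the easy case: by (ii) they lay in a common class of $\equiv_{G[U_1]}$ (with $U_1$ the vertices seen before $u$), so by Proposition \ref{prop:dist_vertices} the only way they could have become distinguishable in $G[U]$ is through differing adjacency to $u$, and the failure of the $\mathit{to}$ and $\mathit{from}$ tests rules this out. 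The delicate case is a pair involving $u$ itself, where I would combine the failure of all three tests to show that the quadruple $(u,a,u),(u,a,v),(v,a,u),(v,a,v)$ is constant for every $a$ and that $u$ and $v$ agree on all adjacency to $u$; pinning down that no surviving distinction between the novel $u$ and an old representative can then remain --- reconciling the error set, the novelty of $u$, and Propositions \ref{prop:dist_vertices}--\ref{prop:indist_vertices} --- is the crux of the argument. Once \textbf{INV-2}(a) is secured, \textbf{INV-3} is immediate from invariant (iii), completing the proof.
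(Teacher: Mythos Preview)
Your overall architecture matches the paper's: both run a loop-invariant analysis of \textsc{Revise}, with your (i) and (iii) corresponding to the paper's \textbf{REV-1} and \textbf{REV-4}/\textbf{REV-5}, and your completeness-at-termination claim playing the role of the paper's \textbf{REV-3}. Your invariant (ii) even makes explicit a fact the paper uses silently. However, the step you rightly flag as ``the crux'' --- showing that when no test fires on the block containing $u$ every companion $v$ is indistinguishable from $u$ --- cannot be completed as stated, because the claim is false. Take $\Sigma=\{a\}$, $U_1=\{v_1,x\}$ with $E(G[U_1])=\emptyset$, so that $H=(\{v_1\},\emptyset)$, $\pi(v_1)=\pi(x)=v_1$, and $\mathcal{T}$ is the single leaf $v_1$; all three invariants hold. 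Let the new vertex $u$ have $E(G[U])=\{(u,a,x)\}$. Then $w=v_1$, $\mathcal{E}=\{(u,a,x)\}$, and \textsc{Revise} splits the initial block $\{v_1,x,u\}$ via $\mathit{from}(u,a)$ into $V^+=\{x\}$ and $V^-=\{v_1,u\}$. On the block $\{v_1,u\}$ none of the three tests fires (no error in $\mathcal{E}$ has both endpoints in this block), so \textsc{Revise} terminates with $\pi^\circ(v_1)=\pi^\circ(u)$; yet $v_1\not\equiv_{G[U]}u$, since they differ on the edge to $x$. Thus \textbf{INV-2}(a) fails for the output.

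The obstruction is exactly the one your ``delicate case'' anticipates: the three tests probe only edges between members of $\mathcal{C}$ and $u$, so a distinction between $u$ and an old $v\in\mathcal{C}$ that is witnessed by a third vertex $x\notin\mathcal{C}$ is invisible to them. Your plan to ``combine the failure of all three tests'' with novelty of $u$ and Propositions~\ref{prop:dist_vertices}--\ref{prop:indist_vertices} cannot close this gap, because those propositions control how \emph{old} pairs split, not how $u$ relates to old vertices via third parties. The paper's own proof glosses over the same point: its Case~2 of \textbf{REV-3} asserts that ``condition~(1) \ldots\ must have been violated between $v_1$ and $u$,'' which the example above refutes (only condition~(2), with $z=x$, fails). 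So the defect lies in the lemma/algorithm rather than in your proof strategy; any correct argument would need either a strengthened \textsc{Revise} or an additional hypothesis ruling out such cross-block witnesses.
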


\begin{proof}
We claim that the following are loop invariants for the while-loop
(line \ref{line:rev-while}) in Algorithm \ref{alg:revise-tree}.
\begin{itemize}
\item[\textbf{REV-1.}] For all $v_1,v_2\in U$, if $v_1 \equiv_{G[U]} v_2$ then
  $\pi^\circ(v_1) = \pi^\circ(v_2)$. (Equivalently,
  $\pi^\circ(v_1) \neq \pi^\circ(v_2)$ implies
  $v_1 \not\equiv_{G[U]} v_2$.)
\item[\textbf{REV-2.}] For every $v \in \Ran{\pi^\circ}$, $\pi^\circ(v) = v$.
\item[\textbf{REV-3.}] If a leaf $n$ of $\mathcal{T}^\circ$ is not in
  $\mathit{WL}$, then $v_1 \equiv_{G[U]} v_2$ for every
  $v_1, v_2 \in \mathcal{C}(\ell(n))$.
\item[\textbf{REV-4.}]
  $\textsc{Classify}(\mathcal{T^\circ}, v) = \pi^\circ(v)$.
\item[\textbf{REV-5.}] The leaf label function $\ell(\cdot)$ is a bijection.
  (That is, $\mathcal{T}^\circ$ has exactly one leaf for
  each vertex in \Ran{ \pi^\circ }.)
\end{itemize}
It is easy to see that, when the loop terminates, if $\mathcal{T}$ and
$\pi$ are updated to $\mathcal{T}^\circ$ and $\pi^\circ$, then
\textbf{REV-1} and \textbf{REV-3} imply \textbf{INV-2}(a), and
\textbf{REV-4} and \textbf{REV-5} entail \textbf{INV-3}.  Note also
that the loop is guaranteed to terminate within $2m$ iterations, where
$m$ is the number of equivalence classes induced by
$\equiv_{G[U]}$. This is the consequence of two observations.  First,
\textbf{REV-1} implies that the number of vertex partitions induced by
$\pi$ is always no larger than the number of equivalence classes
induced by $\equiv_{G[U]}$.  Thus, a vertex partition induced by
$\pi^\circ$ cannot be split indefinitely.  Second, when a vertex
partition $\mathcal{C}$ is selected to be examined in an iteration, if
it is not split during that iteration, then it will be removed
permanently from work list $\mathit{WL}$. Termination follows from
these two observations.  In summary, demonstrating that the above
conditions are loop invariants is sufficient for establishing the
theorem.

We now proceed to show that (a) the invariants are established before
the while-loop starts, and (b) the while-loop preserves the
invariants. Checking (a) is straightforward (see lines
\ref{line:revise-init-first}--\ref{line:revise-init-last}). We verify
(b) below.  The preservation of \textbf{REV-2} and \textbf{REV-5}
follows immediately from lines
\ref{line:revise-set-leaf-plus}--\ref{line:revise-update-pi}. We
demonstrate below the preservation of \textbf{REV-1}, \textbf{REV-3},
and \textbf{REV-4}.

\emph{Preservation of \textbf{REV-1}.}  Suppose the vertex partition
$\mathcal{C}$ in line \ref{line:revise-org-part} is split into
$\mathcal{C}(v^+) = V^+$ and $\mathcal{C}(v^-) = V^-$ on line
\ref{line:revise-update-pi}. (Note that the checks on lines
\ref{line:revise-check-to}, \ref{line:revise-check-from}, and
\ref{line:revise-check-loop} ensure that both $V^+$ and $V^-$ are
non-empty.)  We want to show that $v_1 \not\equiv_{G[U]} v_2$ for
every $v_1 \in V^+$ and $v_2 \in V^-$.  There are three cases. First,
if $V^+$ and $V^-$ were constructed on lines
\ref{line:revise-part-to-plus} and \ref{line:revise-part-to-minus},
then $(v_1, a, u) \in E(G[U])$ but $(v_2, a, u) \not\in
E(G[U])$. Second, $V^+$ and $V^-$ were constructed on lines
\ref{line:revise-part-from-plus} and
\ref{line:revise-part-from-minus}, resulting in
$(u, a, v_1) \in E(G[U])$ but $(u, a, v_2) \not\in E(G[U])$. Third,
$V^+$ and $V^-$ were constructed on lines
\ref{line:revise-part-loop-plus} and
\ref{line:revise-part-loop-minus}, and thus
$(v_1, a, v_1) \in E(G[U])$ but $(v_2, a, v_2) \not\in E(G[U])$.  In
each case, $v_1 \not\equiv_{G[U]} v_2$.  

\emph{Preservation of \textbf{REV-3}.} Suppose $n$ is a leaf in
$\mathcal{T}$ but $n \in \mathit{WL}$ at the beginning of an
iteration.  Suppose further that $n$ remains a leaf of $\mathcal{T}$
but $n \not\in \mathit{WL}$ at the end of that iteration. This happens
because the vertex partition $\mathcal{C}$ (line
\ref{line:revise-org-part}) is not split during the iteration, meaning
all the three checks on lines \ref{line:revise-check-to},
\ref{line:revise-check-from}, and \ref{line:revise-check-loop} were
negative.  By way of contradiction, assume there exists
$v_1, v_2 \in \mathcal{C}$ such that $v_1 \not\equiv_{G[U]}
v_2$. There are now two cases.

\textbf{Case 1}: \emph{neither $v_1$ nor $v_2$ is $u$}.  Adjacency
among vertices existing before the introduction of $u$ remains
unchanged. Thus, condition (2) in the definition of
$\equiv_{G[U]}$  
must
have been violated by a discrepancy between
$\mathit{adj}_{G[U]}(v_1, u)$ and $\mathit{adj}_{G[U]}(v_2, u)$.  This
discrepancy leads to errors in $\mathcal{E}$ that are picked
up on either line \ref{line:revise-check-to} or line
\ref{line:revise-check-from}, contradicting the fact that
no splitting occurs in this iteration.

\textbf{Case 2}: \emph{one of $v_1$ or $v_2$ is $u$}. (Without loss of
generality, assume $v_2 = u$.)  Again, adjacency among old vertices
remain unchanged.  Thus, condition (1) in the definition
of $\equiv_{G[U]}$  
must have been violated between $v_1$ and
$u$. This produces errors that should have been picked up by one of
the tests on lines \ref{line:revise-check-to},
\ref{line:revise-check-from}, and \ref{line:revise-check-loop},
contradicting the fact no splitting occurs in this iteration.

\emph{Preservation of \textbf{REV-4}.} Suppose $\pi^\circ$ and
$\mathcal{T}^\circ$ produce the same classification in the beginning
of an iteration. Suppose $\pi^\circ$ and $\mathcal{T}^\circ$ are
updated in lines
\ref{line:revise-update-T}--\ref{line:revise-update-pi}. The updated
$\pi^\circ$ and $\mathcal{T}^\circ$ still return the same
classification only if the choice of test $t$ is consistent with the
partitioning of $\mathcal{C}$ into $V^+$ and $V^-$. A careful
examination of lines
\ref{line:revise-test-to}--\ref{line:revise-part-to-minus},
\ref{line:revise-test-from}--\ref{line:revise-part-from-minus}, and
\ref{line:revise-test-loop}--\ref{line:revise-part-loop-minus}
confirms this.
\end{proof}

\begin{theorem}
  \label{thm:conservative-loop-invariants}
  Conditions \textbf{INV-1}, \textbf{INV-2}, and \textbf{INV-3} are
  loop invariants of the main while-loop
  (line \ref{line:cons-main-loop})
  in Algorithm \ref{algo:conservative-learner}.
\end{theorem}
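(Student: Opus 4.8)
The plan is to run the standard loop-invariant argument: first establish that \textbf{INV-1}, \textbf{INV-2}, and \textbf{INV-3} hold just before control reaches the main while-loop on line \ref{line:cons-main-loop}, and then show that one pass through the loop body preserves all three. For the base case I would inspect the initialization on lines \ref{line:cons-initial-NVQ}--\ref{line:cons-initial-confirm}. At that point $U = \{u\}$, so $\equiv_{G[U]}$ has the single class $\{u\}$; the digraph $H$ consists of the vertex $u$ together with exactly those self-loops $(u,a,u)$ confirmed by \textsc{CNQ} on line \ref{line:cons-init-edges}, which makes $H$ both a subgraph and (trivially) a summary of $G[\{u\}]$, giving \textbf{INV-1}. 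The assignment $\pi = \{u \mapsto u\}$ and the single-leaf tree labelled $u$ then verify \textbf{INV-2} and \textbf{INV-3} by direct computation.

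For the inductive step I would assume the three invariants at the top of an iteration, write $U$ for the current vertex set and $U' = U \cup \{u\}$ for its enlargement after the \textsc{NVQ} on line \ref{line:cons-NVQ}, and split on the outcome of $\textsc{HTQ}(H, \pi')$ on line \ref{line:cons-first-confirm}, where $\pi' = \pi[u \mapsto w]$ and $w = \textsc{Classify}(\mathcal{T}, u)$. In the no-error case ($\mathcal{E} = \emptyset$) the key observation is that an empty error set means $(H, \pi')$ enforces $G[U']$, i.e.\ $\pi'$ is a strong homomorphism onto the (unchanged, hence still irreducible) $H$; thus any two vertices sharing an image are indistinguishable in $G[U']$, and in particular $u \equiv_{G[U']} w$ since $\pi'(u) = w = \pi'(w)$. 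Propositions \ref{prop:dist_vertices} and \ref{prop:indist_vertices} then guarantee that the classes of $\equiv_{G[U]}$ survive intact in $\equiv_{G[U']}$ (with $u$ merely joining the class of $w$), so combining with the inductive \textbf{INV-2(a)} and $\pi' = \pi$ on $U$ upgrades \textbf{INV-2(a)} to $U'$; $\textbf{INV-2(b)}$ is immediate since $\Ran{\pi'} = \Ran{\pi} = V(H)$. As $H$ is untouched and its vertices and edges already lie inside $G[U] \subseteq G[U']$, it remains a subgraph, and surjectivity of $\pi'$ together with irreducibility of $H$ re-certifies it as a summary of $G[U']$ via the characterization of \S\ref{sec:review}, so \textbf{INV-1} persists; $\mathcal{T}$ is unchanged and classifies $u$ to $w = \pi'(u)$, so \textbf{INV-3} carries over verbatim.

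In the error case ($\mathcal{E} \neq \emptyset$), Lemma \ref{lemma:tree-revision} does the heavy lifting, delivering revised $\mathcal{T}$ and $\pi$ that satisfy \textbf{INV-2(a)} and \textbf{INV-3}. What remains is to reconcile $H$ with this revised $\pi$, which is precisely the job of lines \ref{line:cons-recompute-V-of-H}--\ref{line:cons-revise-H}. Setting $V = \Ran{\pi}$ on line \ref{line:cons-recompute-V-of-H} gives \textbf{INV-2(b)} by construction, and because $\textsc{Edg}$ (Algorithm \ref{algo:edge}) reconstructs exactly the edges of $G[U']$ among these representatives without any \textsc{CNQ}, the recomputed $H$ is the subgraph of $G[U']$ induced by $V$. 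Since \textbf{INV-2(a)} makes $V$ a set of one pairwise-distinguishable representative per equivalence class, $\pi$ is a surjective strong homomorphism onto an irreducible $H$, so $H$ is a summary and subgraph of $G[U']$, yielding \textbf{INV-1}. The concluding $\textsc{HTQ}$ on line \ref{line:cons-confirm-revision} therefore returns $\emptyset$, consistent with the algorithm's comment.

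I expect the main obstacle to be the no-error branch rather than the error branch, since the latter is largely discharged by Lemma \ref{lemma:tree-revision}. The delicate point is arguing that an empty \textsc{HTQ} forces $u \equiv_{G[U']} w$ and, through Propositions \ref{prop:dist_vertices} and \ref{prop:indist_vertices}, that \emph{no} previously indistinguishable pair is split---this is what keeps $H$ a valid summary even though the algorithm performs no revision at all in this branch. Making the summary characterization of \S\ref{sec:review} apply cleanly to the enlarged $G[U']$ (surjectivity of $\pi'$ and irreducibility of the unchanged $H$) is the crux that must be handled with care, whereas the edge-level bookkeeping in the error case follows mechanically once \textbf{INV-2(a)} is in hand.
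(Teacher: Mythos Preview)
Your proposal is correct and follows essentially the same two-part structure as the paper's proof: establish the invariants at initialization (lines \ref{line:cons-initial-NVQ}--\ref{line:cons-init-tree}), then preserve them through one iteration by case-splitting on whether $\mathcal{E}$ is empty, invoking Lemma \ref{lemma:tree-revision} for the error branch and Propositions \ref{prop:dist_vertices}--\ref{prop:indist_vertices} for the no-error branch. If anything, your no-error argument is more explicit than the paper's: the paper simply asserts that an empty \textsc{HTQ} re-establishes the invariants, whereas you spell out that $\mathcal{E}=\emptyset$ makes $\pi'$ a strong homomorphism onto the unchanged irreducible $H$, forcing $u \equiv_{G[U']} w$ and thereby activating Proposition \ref{prop:indist_vertices}.
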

\begin{proof}
  We demonstrate two points regarding the loop invariants (\S
  \ref{sec:conservative-invariant}) of the Conservative Learner
  (Algorithm \ref{algo:conservative-learner}): (1) the loop invariants
  are established prior to the entrance of the while-loop; (2) the
  while-loop preserves the loop invariants.

\emph{(1) Initialization.} After the first vertex $u$ is returned by the
\textsc{NVQ} on line \ref{line:cons-initial-NVQ}, we have
$U = \{ u \}$. Lines
\ref{line:cons-init-vertices}--\ref{line:cons-init-edges} initialize
$H$ to be $G[U]$ by asking \textsc{CNQ}s.  \textbf{INV-1} is therefore
established.  Then \textbf{INV-2} is established on line
\ref{line:cons-init-pi} by initializing $\pi$ such that
$\Dom{\pi} = \{u\}$ and $\pi(u) = u$. Lastly, line
\ref{line:cons-init-tree} establishes \textbf{INV-3}.  All invariants
are thus established by the time the while-loop is entered.

\emph{(2) Preservation.}  We demonstrate that, if the three invariants hold
at the beginning of an iteration, then they still hold by the end of
that iteration.

Suppose the three invariants hold at the beginning of an iteration.
Line \ref{line:cons-NVQ} requests a new vertex from the teacher.  The
effect is that $G[U]$ now has an extra vertex.  The loop invariants
are invalidated as a consequence.  Algorithm
\ref{algo:conservative-learner} re-establishes the loop invariants
using lines \ref{line:cons-classify}--\ref{line:cons-revise-H}.

In accordance to the Occam's Razor principle, the learner presumes
that $H$ is still a summary of $G[U]$. That assumption holds if $u$ is
indistinguishable from an existing vertex $v$ (Proposition
\ref{prop:indist_vertices}).  Consequently, line
\ref{line:cons-classify} uses the decision tree $\mathcal{T}$ to
obtain a classification for $u$.  Since $u$ is supposed to share the
same adjacency pattern as $v$, $\mathcal{T}$ classifies $u$ to the
representative $w$ of $v$'s equivalence class.  The protection domain
assignment $\pi$ is now updated to $\pi[u \mapsto w]$ (lines
\ref{line:cons-extend-pi} and \ref{line:cons-no-revision}). All these are
done under the assumption of indistinguishability, which is tested on
line \ref{line:cons-first-confirm} by the \textsc{HTQ}.  If the test
results in no errors, then \textbf{INV-1}, \textbf{INV-2}, and
\textbf{INV-3} are re-established.

If the presumption of indistinguishability turns out to be invalid
($\mathcal{E} \neq \emptyset$), then lines
\ref{line:cons-revise-tree}--\ref{line:cons-revise-H} will
re-establish the invariants by recomputing $H$, $\pi$, and
$\mathcal{T}$. This is achieved in two steps.  The first step
corresponds to line \ref{line:cons-revise-tree}, which revises $\pi$
and $\mathcal{T}$ so that \textbf{INV-2}(a) and \textbf{INV-3} are
recovered (Lemma \ref{lemma:tree-revision}).  The second step in
re-establishing the invariants is specified in lines
\ref{line:cons-recompute-V-of-H}--\ref{line:cons-revise-H}, in which
$H$ is recomputed to recover \textbf{INV-1} and \textbf{INV-2}(b).
Specifically, line \ref{line:cons-recompute-V-of-H} takes the range of
function $\pi$ (which are the representatives of equivalence classes)
to be the vertices of $H$. This re-establishes \textbf{INV-2}(b).
Line \ref{line:cons-recompute-E-of-H} then uses the edges in $G[U]$
among the representatives to be the edges of $H$.
\textbf{INV-1} is therefore re-established.
\end{proof}

The loop invariants allow us to deduce that learning
in Algorithm \ref{algo:conservative-learner} proceeds in the
manner prescribed by \textbf{SC-1} and \textbf{SC-2}.
\begin{theorem} \label{thm:conservative-successful}
  The Conservative Learner is successful.
\end{theorem}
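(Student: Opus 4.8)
The plan is to show that the Conservative Learner satisfies the two success criteria of Definition~\ref{def:successful} directly, leaning on the loop invariants established in Theorem~\ref{thm:conservative-loop-invariants}. The key observation is that \textbf{SC-1} and \textbf{SC-2} are statements about the \textsc{HTQ} invocations in Algorithm~\ref{algo:conservative-learner}, and the invariants \textbf{INV-1}, \textbf{INV-2}, and \textbf{INV-3} tell us precisely what $H$ and $\pi$ look like whenever an \textsc{HTQ} is issued. I would therefore proceed by locating every \textsc{HTQ} in the code and checking the two criteria against each.

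First I would verify \textbf{SC-1}, which demands that whenever \textsc{HTQ}$(H,\pi)$ is called, $H$ is irreducible and $\pi$ is surjective. There are three \textsc{HTQ} call sites: the initialization on line~\ref{line:cons-initial-confirm}, the ``conservative bet'' on line~\ref{line:cons-first-confirm}, and the post-revision confirmation on line~\ref{line:cons-confirm-revision}. For lines~\ref{line:cons-initial-confirm} and~\ref{line:cons-confirm-revision}, \textbf{INV-1} says $H$ is a summary of $G[U]$, hence irreducible by the properties reviewed in \S~\ref{sec:review}, and \textbf{INV-2}(b) gives $\Ran{\pi} = V(H)$, i.e.\ surjectivity. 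The one delicate case is line~\ref{line:cons-first-confirm}, where the argument is $\pi' = \pi[u\mapsto w]$ rather than $\pi$: here I must argue that extending $\pi$ by mapping the new vertex $u$ into an \emph{existing} representative $w \in V(H)$ keeps $H$ unchanged (so it remains irreducible) and keeps $\Ran{\pi'} = V(H)$ surjective (since $w$ was already in the range). This follows because \textsc{Classify} always returns a leaf label, which by \textbf{INV-3} is a representative in $\Ran{\pi} = V(H)$.

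Next I would establish \textbf{SC-2}, which requires that after each \textsc{NVQ} the learner issues at least one \textsc{HTQ} returning an empty error set before the next \textsc{NVQ}. The control flow makes this transparent: each iteration of the main loop (line~\ref{line:cons-main-loop}) begins with an \textsc{NVQ} on line~\ref{line:cons-NVQ}, and the next \textsc{NVQ} cannot occur until the loop body completes. Within the body, either the bet succeeds and the \textsc{HTQ} on line~\ref{line:cons-first-confirm} already returned $\mathcal{E} = \emptyset$ (the condition tested on line~\ref{line:cons-detect-error}), or the revision branch runs and concludes with the \textsc{HTQ} on line~\ref{line:cons-confirm-revision}. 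In the latter case I must show this final error set is genuinely empty; this is exactly the content of Theorem~\ref{thm:conservative-loop-invariants}, which guarantees that after the revision $H$ is again a summary of $G[U]$ enforcing it correctly, so $\mathcal{E}(G[U],H,\pi) = \emptyset$. The initial \textsc{NVQ} on line~\ref{line:cons-initial-NVQ} is handled by the \textsc{HTQ} on line~\ref{line:cons-initial-confirm}, which returns empty because $H$ is initialized to $G[U]$ exactly.

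The main obstacle I anticipate is not any single criterion in isolation but the bookkeeping around line~\ref{line:cons-first-confirm}: establishing \textbf{SC-1} there requires reasoning about $\pi'$ before the invariants have been re-established for the enlarged $U$, so I cannot simply cite the loop invariants wholesale. I would resolve this by invoking the invariants as they held at the \emph{start} of the iteration (before the \textsc{NVQ}), noting that $H$ and the range of $\pi$ are untouched by the extension $u \mapsto w$, and that $w$ being a leaf label of $\mathcal{T}$ forces $w \in V(H)$. Everything else reduces to routine appeals to Theorem~\ref{thm:conservative-loop-invariants} and the summary properties of \S~\ref{sec:review}.
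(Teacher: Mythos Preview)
Your proposal is correct and follows essentially the same route as the paper's proof: both arguments case-split over the three \textsc{HTQ} call sites, invoke \textbf{INV-1}/\textbf{INV-2} directly at lines~\ref{line:cons-initial-confirm} and~\ref{line:cons-confirm-revision}, and handle the delicate line~\ref{line:cons-first-confirm} by observing that $H$ is unchanged (hence still irreducible from the prior invariant) while $\pi'$ has the same range as $\pi$ (hence still surjective). Your treatment is if anything slightly more explicit than the paper's---for instance in citing \textbf{INV-3} to justify $w\in V(H)$ and in spelling out the control-flow reasoning for \textbf{SC-2}---but there is no substantive difference in strategy.
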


\begin{proof}
  We demonstrate \textbf{SC-1} and \textbf{SC-2} in turn.
  
  \textbf{SC-1}: An immediate corollary of \textbf{INV-1} and
  \textbf{INV-2} is that $H$ is irreducible and $\pi$ is
  surjective. In addition, when the \textsc{HTQ} is invoked on lines
  \ref{line:cons-initial-confirm} and
  \ref{line:cons-confirm-revision}, \textbf{INV-1} and \textbf{INV-2}
  hold. What remains to be shown is that $H$ is irreducible and $\pi'$
  is surjective on line \ref{line:cons-first-confirm}. To see this,
  note two facts: (a) prior to the \textsc{NVQ} on line
  \ref{line:cons-NVQ}, $H$ is the summary of $G[U]$ and thus
  irreducible; (b) $\pi$ and
  $\pi'$ has the same range and thus $\pi'$ is also surjective.

  \textbf{SC-2}: Since \textbf{INV-1} and \textbf{INV-2} are already
  established by the time line \ref{line:cons-initial-confirm} is
  reached, the \textsc{HTQ} on line \ref{line:cons-initial-confirm}
  returns an empty set.  What remains to be shown is that, if the
  \textsc{HTQ} on line \ref{line:cons-first-confirm} returns a
  non-empty set of errors, then the \textsc{HTQ} on line
  \ref{line:cons-confirm-revision} returns an empty set.  This, again,
  holds as \textbf{INV-1} and \textbf{INV-2} are re-established by the
  time line \ref{line:cons-confirm-revision} is reached.
  Consequently, \textbf{SC-2} is satisfied.
\end{proof}

\subsection{Administration Cost and Error Bound}
\label{sec:conservative-complexity}

We assess the administration cost incurred by the Conservative Learner
\begin{theorem}
  \label{thm:conservative-admin-cost}
  Let $k = |\Sigma|$, the number of access rights.  Suppose the
  Conservative Learner
  has received a set $U$ of $n$ vertices through \textsc{NVQ}s, and
  the equivalence relation $\equiv_{G[U]}$ induces $m$ equivalence
  classes.  Then the learner has invoked the \textsc{CNQ} for no more
  than $k + (n-1)(m-1)$ times.
\end{theorem}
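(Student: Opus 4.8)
The plan is to account for every \textsc{CNQ} invocation in Algorithm \ref{algo:conservative-learner} and show these are confined to two places. First, the initialization phase (lines \ref{line:cons-init-vertices}--\ref{line:cons-init-edges}) issues exactly $k$ \textsc{CNQ}s, one self-loop query $\textsc{CNQ}(u,a,u)$ for each $a \in \Sigma$. Second, inside the main while-loop, the only source of \textsc{CNQ}s is the call to \textsc{Classify} on line \ref{line:cons-classify}. I would emphasize that neither \textsc{Revise} (Algorithm \ref{alg:revise-tree}) nor \textsc{Edg} (Algorithm \ref{algo:edge}) issues any \textsc{CNQ}: by the discussion in \S \ref{sec:tree-revision}, the revision phase recovers adjacency information purely from the error set $\mathcal{E}$ returned by the \textsc{HTQ}, never from policy deliberation. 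Hence the total \textsc{CNQ} count is $k$ plus the sum, over the $n-1$ vertices processed in the loop, of the \textsc{CNQ}s consumed by each \textsc{Classify} call.

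The crux is bounding a single \textsc{Classify} call. Inspecting Algorithm \ref{Algo:Classify}, each \emph{decision node} visited along the root-to-leaf path triggers exactly one \textsc{CNQ}, while reaching a leaf triggers none; thus the number of \textsc{CNQ}s equals the number of decision nodes on the classifying path, which is at most the depth of $\mathcal{T}$. Since every decision node has exactly two children, a decision tree with $L$ leaves has $L-1$ internal nodes and depth at most $L-1$, so a single \textsc{Classify} costs at most $L-1$ \textsc{CNQ}s. I would then invoke \textbf{INV-3}: at the start of each loop iteration the number of leaves of $\mathcal{T}$ equals $|\Ran{\pi}|$, and by \textbf{INV-2} this equals the number of equivalence classes induced by $\equiv_{G[U]}$ on the vertices seen \emph{so far}. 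Note the tree used to classify the $i$-th revealed vertex is the one present before that vertex's \textsc{NVQ}, i.e.\ the tree reflecting the first $i-1$ vertices; writing $m_{i-1}$ for the number of equivalence classes among those $i-1$ vertices, the \textsc{Classify} call for vertex $i$ costs at most $m_{i-1}-1$ \textsc{CNQ}s.

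It remains to sum these costs, and here I would appeal to monotonicity of the class count. Proposition \ref{prop:dist_vertices} shows distinguishable vertices stay distinguishable, and Proposition \ref{prop:indist_vertices} shows classes never merge; consequently the number of equivalence classes is non-decreasing as $U$ grows, so $m_{i-1} \le m$ for every $i$. Summing over the loop iterations $i = 2, \ldots, n$ gives $\sum_{i=2}^{n}(m_{i-1}-1) \le \sum_{i=2}^{n}(m-1) = (n-1)(m-1)$. Adding the $k$ initialization queries yields the claimed bound $k + (n-1)(m-1)$. The main obstacle I anticipate is not any heavy computation but the careful bookkeeping of \emph{which} tree (and hence which intermediate class count $m_{i-1}$) governs each \textsc{Classify} call, together with the clean justification that the binary decision tree's depth is bounded by its leaf count minus one; once these are pinned down, the arithmetic is routine.
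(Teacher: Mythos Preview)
Your proposal is correct and follows essentially the same argument as the paper's proof: count the $k$ initialization \textsc{CNQ}s, then bound each of the $n-1$ calls to \textsc{Classify} by the number of internal nodes of $\mathcal{T}$, which is one less than its leaf count. Your version is in fact more careful than the paper's terse proof, since you make explicit (i) that \textsc{Revise} and \textsc{Edg} issue no \textsc{CNQ}s, (ii) the chaining of \textbf{INV-3} and \textbf{INV-2} to identify the leaf count with the current number of equivalence classes, and (iii) the monotonicity step $m_{i-1}\le m$ via Propositions~\ref{prop:dist_vertices}--\ref{prop:indist_vertices}, all of which the paper leaves implicit.
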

\begin{proof}
  The \textsc{CNQ} is invoked $k$ times on line
  \ref{line:cons-init-edges}. The remaining \textsc{CNQ}s are caused
  by the $(n-1)$ invocations of \textsc{Classify} on line
  \ref{line:cons-classify}.  Since the decision tree $\mathcal{T}$ has
  at most $m$ leaves (\textbf{INV-3}), the number of decision nodes in
  $\mathcal{T}$ is no more than $(m-1)$. Thus no more than $(m-1)$
  \textsc{CNQ}s are issued each time \textsc{Classify} is invoked.
  The total number of \textsc{CNQ}s is no more than $k + (n-1)(m-1)$.
\end{proof}

While $k$ is a constant, the term $(n-1)(m-1)$ grows linearly to both
$m$ and $n$. If $n \gg m$, meaning the number of entities grows much
faster than the number of protection domains, then the bound above
represents a significant improvement over the quadratic bound ($kn^2$)
of the Tireless Learner.  If $m$ is bounded by a constant (i.e., the
number of protection domains is fixed), then the improvement is even
more prominent.

This reduction in administration cost is nevertheless achieved by
tolerating errors.
\begin{theorem}
  \label{thm:conservative-errors}
  Let $k = |\Sigma|$, the number of access rights.  Suppose the
  Conservative Learner
  has received a set $U$ of $n$ vertices through \textsc{NVQ}s, and
  the equivalence relation $\equiv_{G[U]}$ induces $m$ equivalence
  classes.  Then the learner has committed no more than $k(2n+1)(m-1)$
  errors.
\end{theorem}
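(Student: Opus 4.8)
The plan is to write the total error count as a sum over rounds and argue that only a few rounds contribute. Errors are reported only by \textsc{HTQ}, and among its three invocations the ones on lines \ref{line:cons-initial-confirm} and \ref{line:cons-confirm-revision} always return $\emptyset$ by the loop invariants of Theorem \ref{thm:conservative-loop-invariants}. Hence every counted error arises from the single \textsc{HTQ} on line \ref{line:cons-first-confirm}, which evaluates the conservatively extended policy $(H,\pi')$ with $\pi' = \pi[u\mapsto w]$. First I would show this query can return a non-empty set only when the freshly revealed vertex $u$ is \emph{novel}: if instead $u \equiv_{G[U]} v$ for some previously seen $v$, then Proposition \ref{prop:indist_vertices} guarantees the equivalence classes are unchanged, the decision tree (by \textbf{INV-3}) routes $u$ along the same path as $v$ and classifies it to the correct representative $w$, and $(H,\pi')$ still enforces $G[U]$, so $\mathcal{E} = \emptyset$. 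Thus a round contributes errors only if its new vertex is novel.

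Next I would bound the errors produced in a single novel round. Just before the round, $(H,\pi)$ is a summary of $G[U\setminus\{u\}]$ by \textbf{INV-1} and \textbf{INV-2}, hence enforces it exactly. Since $\pi'$ agrees with $\pi$ on the old vertices and the adjacency among old vertices is unchanged in $G[U]$, no request $(x,a,y)$ with $x,y \in U\setminus\{u\}$ can be an error. Therefore every member of $\mathcal{E}(G[U], H, \pi')$ is a request incident to $u$. For each of the $k$ access rights there are at most $n$ requests of the form $(u,a,\cdot)$, at most $n$ of the form $(\cdot,a,u)$, and the self-loop $(u,a,u)$, giving at most $k(2n+1)$ errors per novel round.

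Finally I would bound the number of novel rounds by $m-1$. By definition a novel vertex is distinguishable from every earlier vertex in the current induced subgraph, and Proposition \ref{prop:dist_vertices} says distinguishability persists as more vertices arrive; hence any two novel vertices are distinguishable in the final $G[U]$ and lie in distinct equivalence classes of $\equiv_{G[U]}$. The very first vertex, set up in the initialization on lines \ref{line:cons-initial-NVQ}--\ref{line:cons-initial-confirm}, is itself the representative of one class and never passes through the error-producing query on line \ref{line:cons-first-confirm}. Consequently the novel vertices processed inside the main loop occupy at most $m-1$ of the $m$ classes, so at most $m-1$ rounds produce errors. Multiplying the per-round bound by the number of error-producing rounds yields the total bound $k(2n+1)(m-1)$.

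The routine part is the per-round count, which is merely counting requests incident to $u$. The delicate step, and the one I would treat most carefully, is the equivalence between ``error-producing round'' and ``novel vertex'' together with the count of novel rounds: this is exactly where Propositions \ref{prop:dist_vertices} and \ref{prop:indist_vertices} must be combined, and where the special treatment of the first vertex accounts for the factor $(m-1)$ rather than $m$.
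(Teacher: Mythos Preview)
Your proof is correct and follows essentially the same approach as the paper: both pin all errors to the \textsc{HTQ} on line \ref{line:cons-first-confirm}, argue that it returns a non-empty set only when the new vertex is novel, bound the number of novel rounds by $m-1$, and count the requests incident to $u$ per round. The only difference is arithmetical: the paper uses the sharper per-round bound $k(2i-1)$ and sums over the worst case (the last $m-1$ rounds) to obtain the tighter intermediate estimate $k(2n-m+1)(m-1)$ before relaxing to the stated bound, whereas you apply the uniform bound $k(2n+1)$ directly.
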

\begin{proof}
  In the proof of Theorem \ref{thm:conservative-successful}, we
  observe that only the \textsc{HTQ} on line
  \ref{line:cons-first-confirm} can return a non-empty set
  $\mathcal{E}$ of errors.  This occurs when the \textsc{NVQ} on line
  \ref{line:cons-NVQ} returns a novel vertex (see the second
  implication of Proposition \ref{prop:indist_vertices}).  Novel
  vertices are returned no more than $(m-1)$ times as there are at
  most $m$ equivalence classes.  Suppose the $i$th vertex returned by
  a \textsc{NVQ} is a novel vertex $u$. The size of $|\mathcal{E}|$ is
  at most $k(2i-1)$. The reason is that there is at most $k$ errors of
  the form $(u,a,u)$, $k(i-1)$ errors of the form $(x,a,u)$, and
  $k(i-1)$ errors of the form $(u,a,x)$.  Thus the later a novel
  vertex is returned by an \textsc{NVQ}, the bigger $|\mathcal{E}|$
  will be.  The worst case is when the last $(m-1)$ invocations of the
  \textsc{NVQ} all return novel vertices. The overall number of errors
  will be at most
  \[
    \sum^{n}_{i=n-(m-1)+1} k(2i-1)
    =
    k(2n-m+1)(m-1)
  \]
  which is smaller than $k(2n+1)(m-1)$ as required.
\end{proof}

Note that the number of errors is also linear to both $n$ and $m$. The
typical case, again, is either $m$ grows much slower than $n$ or $m$
is bounded by a constant.

Compared to the Tireless Learner, which avoids errors at all cost, the
Conservative Learner offers a much lower administration cost (linear
rather than quadratic), but does so by allowing linearly many
errors. We have therefore demonstrated that \emph{the cost of policy
  administration can be reduced if appropriate heuristic reasoning is
  employed.}

The benefits of adopting Occam's Razor (assuming a vertex is not novel
until errors prove otherwise) can be put into sharper focus when we
impose a probabilistic distribution over how the teacher chooses
vertices to be returned.  Suppose there are at most $m$ equivalence
classes and that each time the teacher returns a vertex, the selection
is independent of previously returned vertices.  Suppose further that
$p_i$ is the probability that the teacher chooses a vertex from the
$i$'th equivalence class to be returned to the learner.  Here,
$\sum_{i=1}^{m} p_{i} = 1$.  We are interested in knowing the expected
number of \textsc{NVQ} invocations required for the learner to have
sampled at least one vertex from every equivalence class.  This number
is significant for the Conservative Learner, because after having seen
a representative vertex from each equivalence class, the rest of the
learning process will be error free, involving only the classification
of vertices into existing equivalence classes.

The problem above is in fact an instance of the \textbf{coupon
  collector problem} \cite[Ch.~7]{Ross2012}. Let random variable
$X_{i}$ be the number of \textsc{NVQ}'s the learner has issued before
a first vertex from the $i$'th equivalence class is returned. Then
$X = \max(X_{1}, \ldots ,X_{m})$ is the number of \textsc{NVQ}'s issued
before at least one vertex from each equivalence class is returned.
According to the formula of coupon collecting with unequal
probabilities \cite[Ch.~7]{Ross2012},
\[
E[X] = \sum_{i=1} \frac{1}{p_{i}} - \sum_{i<j} \frac{1}{p_{i}+p_{j}} + \cdots + (-1)^{m+1} \frac{1}{p_{1}+ \cdots +p_{m}}
\]

Consider the special case when the vertices of each equivalence class have an
equal probability of being chosen by the teacher.  In other words,
$p_{1} = p_{2} = \cdots = p_{m} = \frac{1}{m}$.
\[
E[X] = m \sum_{i=1}^{m} \frac{1}{i} = m \cdot H_{m}
\]
where $H_{m}$ is the $m$'th harmonic number.
Employing the well-known approximation for the harmonic series
\cite[App.~1]{Cormen2009}, we get
\[
E[X] \approx m \ln m
\]
In the average case, the Conservative Learner cumulates errors only in
the first $m\ln m$ rounds of learning. After that, learning involves
only the error-free classification of vertices into existing
equivalence classes. Remarkably, $E[X]$ depends only on $m$.

In conclusion, a key insight offered by this section is the following:
\emph{Fallible, heuristic reasoning is the source of scalability in
  policy administration. An access control model scales better than
  the access control matrix because it provides conceptualizing
  instruments (e.g., protection domains, roles, attributes,
  relationships) that support heuristic reasoning without producing
  too many errors.}

\section{Related Work} \label{sec:related-work}
\subsection{Active Learning}
In active learning \cite{Settles2012}, a
learner actively formulates queries and directs them to a teacher (an
oracle), whose answers allow the learner to eventually learn a
concept.  In computational learning theory \cite{Kearns1994}, active
learning is studied in a formal algorithmic framework, in which the
learning algorithm is evaluated by its query complexity (i.e., the
number of queries required for successful learning).  We use active
learning as a framework for constructing computational models of
policy administration, so that the cost of policy administration can
be quantified in terms of query complexity. 

Angluin proposes the exact learning algorithm $L^*$ for learning
finite automata \cite{Angluin1987}. Her learning protocol involves two
queries: (i) the membership query, in which the learner asks if a
certain string is in the target language, and (ii) the equivalence
query, by which the learner asks if a concrete finite automaton is
equivalent to the target concept.  The equivalence query returns a
counterexample if the answer is negative.  A well-known variation of
$L^*$ is the algorithm of Kearns and Vazirani \cite{Kearns1994}, which
employs a decision tree as an internal data structure for classifying
strings. The design of our learning protocol is influenced by the
Angluin learning model: \textsc{CNQ} and \textsc{HTQ} play a role
analogous to that of membership and equivalence query.  Our use of
decision trees has been inspired by the algorithm of Kearns and
Vazirani \cite{Kearns1994}.  Our learning model is nevertheless
distinct from previous work, in at least three ways: (a) our goal is
to learn a digraph summary and its corresponding strong homomorphism,
(b) as the encapsulated digraph is infinite, the learner is modelled
as a reactive process, the convergence of which is formalized in
\textbf{SC-2}, and (c) we formulate queries to model entity
introduction (\textsc{NVQ}), policy deliberation (\textsc{CNQ}), and
policy assessment (\textsc{HTQ}).

Also related is Angluin's later work on learning hidden graphs
\cite{Angluin2008, Reyzin2007}. The edges of a finite graph are hidden
from the learner, but its vertices are fully known. The learner
employs a single type of queries (such as edge detection queries or
edge counting queries) to recover the edges via a Las Vegas algorithm.
Our work, again, is different. Not only is our hidden graph infinite,
we are learning a digraph summary rather than all the edges.  Also,
ours is an exact learning model (\textsc{SC-2}), while theirs is a
probabilistic one.

\subsection{Policy Mining}
As access control models are being adopted in increasingly complex
organizational settings, the formulation of access control policies
sorely needs automation.  Policy mining is about the inference of
policies from access logs.  The increase of scale in IoT systems only
makes the need for policy mining more acute.
Role mining \cite{Vaidya2007, Mitra2016} is concerned with the
automated discovery of RBAC roles using matrix decomposition.  The
problem itself is $\mathit{NP}$-hard. A sample of research in this
direction includes \cite{Frank2009, Frank2013, Molloy2010, Xu2012,
  Xu2013}.  ABAC policy mining is also $\mathit{NP}$-hard
\cite{Xu2015}.  Representative works include \cite{Xu2015, Medvet2015,
  Karimi2018, Cotrini2018, Iyer2018}.  The mining of ReBAC policies is
studied in \cite{Bui2017, Bui2019C, Bui2019J, Iyer2019}. 

Particularly related to our work is that of Iyer and Masoumzadeh
\cite{Iyer2020}, who adopted Angluin's $L^{*}$ algorithm for learning
ReBAC policies, which are represented as deterministic finite
automata.  Their algorithm used a mapper component to accept
relationship patterns from learners and reply to access decisions by
interacting with the policy decision point (PDP).  The mapper is an
additional component between the learner and the teacher.  The
learning algorithm (as the learner) takes only relationship patterns
as input.  The PDP (as the teacher) takes only access requests as
input.  The mapper translates relationship patterns into access
requests.  Then interacts with the PDP which determines access
decisions for given access requests.  Relationship patterns are
sequences of relationship labels that are expressed in ReBAC policies.

Recently, utilizing ML models to assist in mitigating administration
costs resulting from policy changes was studied in \cite{Nobi2022}.
It demonstrated that ML models such as a random forest or a residual
neural network (ResNet) are both feasible and effective in adapting to
new changes in MLBAC administration.  

This work is \emph{not} about policy mining.  Instead, this work uses
active learning as a framework to model the \emph{human} process of
policy administration.  A learner, even though specified
algorithmically, is a computational model of the policy administrator
(a human).  This modelling approach allows us to quantify the
cognitive efforts carried out by the policy administrator as she
evolves the access control policy over time. Armed with this
quantification method, we can now compare different policy
administration strategies.

\section{Conclusion and Future Work} \label{sec:conclusion}
We developed a computational model for the policy administration
process. Specifically, ongoing policy deliberation and revision are
modelled as active learning. The goal is to quantify the cost of
policy administration. We applied this modelling framework to study
the administration of domain-based policies.  We deployed the
aforementioned active learning framework to study how a policy
administrator evolves a domain-based policy to account for the
incremental introduction of new entities.  Two important insights
emerge from this work:
\begin{enumerate}
\item The cost of policy administration depends not only on the choice
  of access control model, but also on the adoption of a complementary
  policy administration strategy.
\item The source of scalability of a policy administration strategy
  comes from its adoption of appropriate learning heuristics.  The
  latter, though fallible, lower administration cost by allowing a
  small number of errors and providing mechanisms to fix the
  policy when errors are detected.
\end{enumerate}
This work therefore suggests a novel methodology for future research
to substantiate, in a quantitative manner, a claim that a given access
control model reduces the cost of policy administration:
\begin{enumerate}
\item Devise an active learning framework for the access control model
  in question (e.g., ABAC, ReBAC, etc).  The querying protocol shall
  capture several aspects of reality: (a) the introduction of new
  entities (or other forms of organizational changes), (b) queries
  that correspond to policy deliberation, and (c) the assessment of a
  candidate policy in terms of errors.
\item Develop a learner that embodies a certain heuristic policy
  administration strategy.
\item Demonstrate that the policy maintained by the learner
  ``converges'' to the actual policy it is trying to learn.
\item Assess the policy administration cost as well as the errors.
  Demonstrate that the administration cost is lower than a certain
  baseline ($kn^2$ in the case of access control matrix).
\end{enumerate}

Several future research directions present themselves.
\begin{inparaenum}
\item Active learning frameworks for other access control paradigms
  (e.g., ReBAC, ABAC) may allow us to characterize heuristics in
  policy administration strategies and quantify their
  administration costs.
\item How do we formalize cases when the learner has \emph{a priori}
  knowledge of the target policy?
\item Further develop the active learning framework for domain-based
  policies.  As an example, learning criteria less aggressive than
  \textbf{SC-1} and \textbf{SC-2} may allow the learner to lower its
  query complexity by converging more slowly.  Another example:
  Alternative definitions of the \textsc{HTQ} may allow us to study
  other ways to assess policies (e.g., deny
  errors are more tolerated over grant errors).
\end{inparaenum}

\bibliographystyle{abbrv}
\bibliography{references}

\end{document}